\documentclass[a4paper, USenglish, numberwithinsect, cleveref, autoref, thm-restate]{lipics-modify}
\usepackage[ruled,vlined,noend,linesnumbered]{algorithm2e}
\hideLIPIcs  %uncomment to remove references to LIPIcs series (logo, DOI, ...), e.g. when preparing a pre-final version to be uploaded to arXiv or another public repository

\title{Extending CDCL to disjunctions of parity equations} %TODO Please add

\author{Paul Beame}{University of Washington, USA \and \url{https://homes.cs.washington.edu/~beame/site/} }{beame@cs.washington.edu}{https://orcid.org/0000-0002-2666-3545}{}%TODO mandatory, please use full name; only 1 author per \author macro; first two parameters are mandatory, other parameters can be empty. Please provide at least the name of the affiliation and the country. The full address is optional. Use additional curly braces to indicate the correct name splitting when the last name consists of multiple name parts.

\author{Glenn Sun}{University of Washington, USA \and \url{https://www.glennsun.com}}{glennsun@cs.washington.edu}{https://orcid.org/0000-0001-6510-2508}{}

\authorrunning{P. Beame and G. Sun} %TODO mandatory. First: Use abbreviated first/middle names. Second (only in severe cases): Use first author plus 'et al.'

\Copyright{Paul Beame and Glenn Sun} %TODO mandatory, please use full first names. LIPIcs license is "CC-BY";  http://creativecommons.org/licenses/by/3.0/

\ccsdesc[500]{Theory of computation~Automated reasoning}
\ccsdesc[500]{Theory of computation~Proof complexity}
\keywords{SAT, CDCL, Proof Complexity, Parity Reasoning} %TODO mandatory; please add comma-separated list of keywords

\category{} %optional, e.g. invited paper

\relatedversiondetails{Conference Version}{https://doi.org/10.4230/LIPIcs.SAT.2026.5} %linktext and cite are optional

\supplementdetails[subcategory={Source Code}, swhid={swh:1:snp:4db00879b68ea0da4cb4079628e526b4c21121bc;origin=https://github.com/glenn-sun/xorcle}]{Software}{https://github.com/glenn-sun/xorcle} %linktext, cite, and subcategory are optional

\funding{This work was supported by NSF grant CCF-2422205.} %optional, to capture a funding statement, which applies to all authors. Please enter author specific funding statements as fifth argument of the \author macro.

\acknowledgements{Parts of the source code were written with OpenAI Codex. All source code has been manually reviewed and tested.} % optional

\nolinenumbers %uncomment to disable line numbering

\usepackage{bussproofs} % for prooftree environment
\usepackage{mathtools} % for \vdotswithin
\usepackage{tikz}
\usepackage{tikz-cd}

\crefname{algocfline}{Algorithm}{Algorithms}

\DeclareMathOperator{\Res}{Res}
\DeclareMathOperator{\CDCL}{CDCL}
\DeclareMathOperator{\LRUP}{LRUP}
\DeclareMathOperator{\linspan}{span}
\DeclareMathOperator{\level}{level}

\newcommand{\eq}[1]{\mbox{\ensuremath{[#1]}}}
\newcommand{\proofreason}[1]{\small \textsf{#1}}
\newcommand{\linneg}[1]{\mathrm{neg}(#1)}
\newcommand{\highlight}[1]{\colorbox{black!10}{$#1$}}
\newcommand{\solverName}{{\sf{Xorcle}}}
\newcommand{\eqfalse}{\mathbf{1}}
\newcommand{\eqtrue}{\mathbf{0}}

\usepackage{crossreftools}
\begin{document}

\maketitle

%TODO mandatory: add short abstract of the document
\begin{abstract}
Because CDCL produces proofs in the Resolution proof system, problems provably hard for Resolution are also provably hard for CDCL. Exponentially shorter proofs can sometimes be found using stronger proof systems such as $\Res(\oplus)$, a generalization of Resolution to XNF formulas, whose constraints are disjunctions of parity equations (``linear clauses'') such as $(x \oplus y) \lor \lnot (y \oplus z)$. While some modern solvers like CryptoMiniSAT reason on Boolean clauses with separate parity equations, reasoning about more general linear clauses is less explored.

We present $\CDCL(\oplus)$, a generalization of CDCL to XNF formulas, and prove a bidirectional connection with $\Res(\oplus)$: $\CDCL(\oplus)$ not only produces $\Res(\oplus)$ proofs, but also polynomially simulates $\Res(\oplus)$ given nondeterministic decisions and restarts, mirroring the classical relationship between CDCL and Resolution. Our key technical tool is a new set of inference rules for $\Res(\oplus)$ that helps us translate Resolution-based subroutines such as 1-UIP clause learning. Altogether, $\CDCL(\oplus)$'s parity reasoning includes branching on arbitrary parity equations, linear-algebraic reasoning during unit propagation, and learning linear clauses through conflict analysis. 

We provide a proof-of-concept implementation of $\CDCL(\oplus)$ called \solverName{}, which includes adaptations of existing CDCL heuristics to XNF formulas and an extension of LRUP proof logging that we call $\LRUP(\oplus)$. On a selected suite of benchmarks focusing on native XNF formulas, \solverName{} outperforms existing solvers such as Kissat and CryptoMiniSAT. Additionally, on Tseitin formulas written in CNF, even without preprocessing, \solverName{}'s running time appears to scale nearly polynomially. 

\end{abstract}
%\newpage
\section{Introduction}

Although conflict-driven clause learning (CDCL) is the dominant method today for SAT solving, it also has known limitations. Because CDCL produces proofs of unsatisfiability in the Resolution proof system, all problems with exponential lower bounds for Resolution proof size must also be hard for CDCL~\cite{DBLP:journals/jair/BeameKS04}.
These limitations have led some solvers to incorporate ideas and fragments of stronger proof systems: 
For example,  pseudo-Boolean solvers~\cite{DBLP:conf/dac/ChaiK03, DBLP:journals/jsat/SheiniS06, DBLP:conf/sat/VinyalsEGGN18} use ideas from Cutting Planes proofs~\cite{manual:Gomory58, DBLP:journals/dm/Chvatal73a, DBLP:journals/dam/CookCT87}, bounded variable addition~\cite{DBLP:conf/hvc/MantheyHB12} includes a fragment of Extended Resolution~\cite{manual:Tseitin68}, and CNF-XOR solvers~\cite{DBLP:conf/sat/SoosNC09} include some parity reasoning present in $\Res(\oplus)$ (``$\Res$-parity'')~\cite{DBLP:conf/mfcs/ItsyksonS14, DBLP:journals/apal/ItsyksonS20}.

In contrast to these situations where solvers apply some aspects of the corresponding proof system, the relationship between classical CDCL and Resolution is \emph{bidirectional}: not only does CDCL produce Resolution proofs, but with the right choice of heuristic decisions, any Resolution proof can also be turned into a CDCL execution~\cite{DBLP:journals/ai/PipatsrisawatD11}.
More precisely, when we view the non-deterministic versions of CDCL solvers as proof systems in their own right, they can polynomially simulate (i.e., \emph{$p$-simulate}~\cite{DBLP:journals/jsyml/CookR79}) Resolution proofs.

In this work, we define a CDCL-style algorithm that not only uses $\Res(\oplus)$ reasoning, but also $p$-simulates all of $\Res(\oplus)$, providing the same precise understanding that we have for classical CDCL.  $\Res(\oplus)$ handles formulas with constraints involving XORs, which occur in a variety of areas including cryptanalysis and model counting~\cite{manual:Bard09, DBLP:conf/aaai/GomesSS06}. 
A widely studied special case of $\Res(\oplus)$-based solving, known as \emph{CNF-XOR solving}~\cite{DBLP:conf/cl/BaumgartnerM00, DBLP:conf/sat/Chen09, DBLP:conf/ecai/LaitinenJN10, DBLP:conf/ictai/LaitinenJN12}, reasons with standard Boolean clauses in conjunction with \emph{parity equations}: expressions involving only $\oplus$ and negation, such as $\lnot(x \oplus y \oplus z)$, which can be written in equation form as $\eq{x \oplus y \oplus z = 0}$.
CryptoMiniSAT~\cite{DBLP:conf/sat/SoosNC09, DBLP:conf/sat/Soos10, DBLP:conf/aaai/SoosM19, DBLP:conf/cav/SoosGM20} is a particularly important CNF-XOR solver, which both performs parity reasoning such as Gaussian elimination on parity equations and also detects short parity equations encoded in Boolean clauses.

Although CNF-XOR solvers use $\Res(\oplus)$ reasoning, $\Res(\oplus)$ is more expressive in general. 
In $\Res(\oplus)$, instead of separate Boolean clauses and parity equations, there is only one kind of object that generalizes both: a \emph{linear clause} is a disjunction of parity equations, such as $(x \oplus y) \lor \lnot (y \oplus z)$. 
A conjunction of linear clauses is sometimes called an XOR-OR-AND normal form (XNF) formula~\cite{DBLP:journals/mics/AndraschkoDK24}.
%, and an XNF formula where each linear clause has at most $k$ equations is called a $k$-XNF formula. 
Not only is XNF more expressive than CNF-XOR, but $\Res(\oplus)$ can also succinctly prove some problems, such as the bijective pigeonhole principle~\cite{DBLP:conf/mfcs/ItsyksonS14, DBLP:journals/tcs/Haken85}, that have no known polynomial-length proofs in CNF-XOR systems. 

While XNF-solving has received less attention than CNF and CNF-XOR solving, some proposed algorithms for XNF-solving exist.
%including some that are phrased in the SRES proof system~\cite{?}, which is inspired by Gr\"obner basis algorithms and is equivalent to $\Res(\oplus)$. 
In their original paper defining $\Res(\oplus)$~\cite{DBLP:conf/mfcs/ItsyksonS14}, Itsykson and Sokolov outlined a backtracking search procedure and proved that it is equivalent to the tree-like fragment of $\Res(\oplus)$. 
Hor\'{a}\v{c}ek~\cite{DBLP:phd/dnb/Horacek20} defined an algebraic version of unit propagation in a proof system denoted SRES, inspired by Gr\"obner basis algorithms and equivalent to $\Res(\oplus)$, yielding an analogue of DPLL when combined with the algorithm of~\cite{DBLP:conf/mfcs/ItsyksonS14}.
Two other proposed algorithms use different methods entirely: (1) a brute-force search through possible ways to apply SRES inference rules~\cite{DBLP:conf/scsquare/HoracekK18}, and (2) a graph-based approach to solve 2-XNF formulas (the XNF analog of 2-CNF formulas) called 2-Xornado~\cite{DBLP:journals/mics/AndraschkoDK24}.

Most recently, in work that is concurrent and independent of ours, Danner and Kreuzer~\cite{DBLP:phd/dnb/Danner25, manual:DannerK25} show how to use clause learning with XNF formulas. They also implement a restricted version that only makes Boolean decisions and inferences, and is designed for sparse parity equations.
While our clause learning procedures are similar, we provide distinct contributions. 
On the theory side, we are able to prove that our algorithm $p$-simulates $\Res(\oplus)$, using a new characterization of $\Res(\oplus)$ that also illuminates how our clause learning algorithm is a direct analogue of classical 1-UIP learning.
On the practical side, our implementation efficiently uses dense parity equations for both decisions and inferred constraints, allowing greater flexibility and potential to improve over classical CDCL on CNF formulas. 
We also propose new XNF-specific heuristics and incorporate proof logging. 

\paragraph*{Our contributions}

Our algorithm $\CDCL(\oplus)$ (``CDCL-parity'') generalizes CDCL to XNF formulas, by branching on arbitrary parity constraints, performing linear-algebraic reasoning during unit propagation, and learning linear clauses through conflict analysis. 
% Our notions of unit propagation and asserting clauses exactly match semantic characterizations of classical CDCL.
%(\cref{sec:classic}). 
% This is important because implication graphs, which are used to syntactically define many aspects of classical CDCL, have no clear analogue when working in XNF. 
The clause learning algorithm follows naturally from a new characterization of the $\Res(\oplus)$ proof system, %(\cref{thm:res-parity}, \cref{sec:proof-res-par}), 
which is our key technical tool. 
Also, although implication graphs are used heavily in classical CDCL, they have no clear XNF analogue. Instead, our definitions match what classical CDCL does semantically.

We prove that $\CDCL(\oplus)$ $p$-simulates the $\Res(\oplus)$ proof system using nondeterministic decisions and restarts. %(\cref{thm:full-simulation}, \cref{sec:proof-full-sim}). 
As a side note, this proof can also be adapted to classical CDCL and Resolution, 
and thereby improves the polynomial factor originally proven in~\cite{DBLP:journals/ai/PipatsrisawatD11}, matching that in~\cite{DBLP:journals/lmcs/BeyersdorffB23} with a simpler argument.
%(\cref{cor:full-sim-resolution}).
We also prove that our unit propagation is equivalent to a fragment of $\Res(\oplus)$ called input $\Res(\oplus)$,
%(\cref{def:input-res}), 
again mimicking a result about classical CDCL~\cite{DBLP:journals/jair/BeameKS04}. 

We call our proof-of-concept implementation 
\solverName{} (XOR Clause LEarning).  
In it, we adapt and introduce heuristics for watch structures, decision making, and phase selection,
%(\cref{sec:heur}),
and additionally implement proof logging using a new $\LRUP(\oplus)$ (``LRUP-parity'') format.
%(\cref{sec:drat}). 
Experimental results show that despite being a new solver, \solverName{} outperforms existing solvers such as Kissat, CryptoMiniSAT, and 2-Xornado on a selected suite of benchmarks, %(\cref{sec:exp}), 
demonstrating the potential of our approach.
The benchmark set consists of one cryptography-related XNF family \cite{DBLP:journals/mics/AndraschkoDK24}, three synthetic XNF families, and one CNF family consisting of Tseitin formulas~\cite{manual:Tseitin68}. 
Tseitin formuals are notable because even without preprocessing, \solverName{}'s running time scales nearly polynomially on our test set, despite exponential lower bounds for Resolution and classical CDCL~\cite{DBLP:journals/jacm/Urquhart87}. 

%We end by discussing open directions (\cref{sec:future}).

\section{Preliminaries on \texorpdfstring{$\Res(\oplus)$}{Res(⊕)}}
The set $\mathbb{F}_2 = \{0, 1\}$ is the field of two elements. We will write parity expressions as affine equations such as \eq{x \oplus y = 0} instead of $\lnot(x \oplus y)$. The symbol $\oplus$ denotes XOR on individual Booleans, whereas $+$ denotes bitwise XOR on equations, where we associate an equation with a vector in $\mathbb{F}_2^{n+1}$ by reading off its coefficients, ending with the RHS. For example, with 2 variables, $\eq{x=1}$ is $(1, 0, 1)$, $\eq{y=1}$ is $(0, 1, 1)$, and $\eq{x \oplus y = 0}$ is $(1, 1, 0)$. Then $\eq{x = 1} + \eq{y = 1} = \eq{x \oplus y = 0}$. Additionally, we adopt the notation $\eqfalse = \eq{0 = 1}$ (i.e.\ $\bot$ in Boolean contexts) and $\eqtrue = \eq{0 = 0}$, so that if $f = \eq{x \oplus y = 1}$, then $f + \eqfalse = \eq{x \oplus y = 0}$. 

When determining linear independence, we consider the full vector, so $\{\eq{x = 0}, \eq{x = 1}\}$ is linearly independent. A set of equations is consistent if they have a common solution. The set of equations logically implied by a \emph{consistent} starting set of equations $U$ is exactly $\linspan(U)$, the $\mathbb{F}_2$-linear span of $U$.

Next, we will define the $\Res(\oplus)$ proof system~\cite{DBLP:conf/mfcs/ItsyksonS14, DBLP:journals/apal/ItsyksonS20}. A proof is comprised of a finite sequence of lines, each of which is either a starting hypothesis or derived from previous lines via application of inference rules~\cite{DBLP:journals/jsyml/CookR79}. We call the lines of $\Res(\oplus)$ \emph{linear clauses}.

\begin{definition}
    A \emph{linear clause} is a disjunction of affine equations over $\mathbb{F}_2$. (We will often just call these clauses, unless we need to disambiguate them from Boolean clauses.) By convention, we disallow tautological clauses (evaluating to true on all assignments). We write $C \equiv D$ if the two clauses are true on the same assignments. A clause with one equation is called a \emph{unit}, including the contradiction $\eqfalse$. An \emph{XNF formula} is a conjunction of linear clauses, and a \emph{$k$-XNF formula} is one where every linear clause has at most $k$ equations.
\end{definition}

Linear clauses strictly generalize Boolean clauses, since one can take affine equations with a single variable. An important distinct feature is that two syntactically distinct linear clauses $C$ and $D$ may actually have $C \equiv D$.  For example, $\eq{x = 1} \lor \eq{y = 1} \equiv \eq{x \oplus y = 1} \lor \eq{y = 1}$.  In order to efficiently determine semantic equivalence, the procedure is to negate the clause and take the span. For example, the negation of $C = \eq{x = 1} \lor \eq{y = 1}$ is $\eq{x = 0} \land \eq{y = 0}$. Because the linear span of a set of units has the same set of satisfying assignments as the set itself, $\linspan(\eq{x = 0}, \eq{y = 0})$ captures the semantic information of $C$. 

\begin{definition}
    For a linear clause $C = f_1 \lor \dots \lor f_m$, define its \emph{linear negation} to be the subspace of affine equations $\linneg C = \linspan(f_1 + \eqfalse, \dots, f_m + \eqfalse)$.
\end{definition}

\begin{proposition}
    For any clauses $C$ and $D$, we have $C \equiv D$ if and only if $\linneg C = \linneg D$.
\end{proposition}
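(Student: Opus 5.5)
The plan is to reduce semantic equivalence of clauses to equality of their falsifying sets, and then identify each falsifying set with a linear span. For a clause $C = f_1 \lor \dots \lor f_m$, an assignment $a \in \mathbb{F}_2^n$ falsifies $C$ exactly when it satisfies every negated equation $f_i + \eqfalse$. Since $C \equiv D$ means $C$ and $D$ are true on the same assignments, it is equivalent to saying their falsifying sets $Z(C) = \{a : a \models f_i + \eqfalse \text{ for all } i\}$ and $Z(D)$ coincide. So $Z(C)$ is the solution set of the affine system $\{f_1+\eqfalse,\dots,f_m+\eqfalse\}$. Because linear combinations of satisfied equations are satisfied, $Z(C)$ is also the common solution set of all of $\linneg C$.

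The direction ($\Leftarrow$) is then immediate. If $\linneg C = \linneg D$, the two clauses have the same falsifying sets, because each is the common solution set of the same subspace of equations. Hence $C \equiv D$.

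For ($\Rightarrow$) I will use the fact stated in the preliminaries: for a \emph{consistent} set of equations $U$, the equations logically implied by $U$ are exactly $\linspan(U)$. This is where the convention forbidding tautological clauses matters. Since $C$ is not a tautology, $Z(C) \neq \emptyset$, so $\{f_i + \eqfalse\}$ is consistent, and likewise for $D$.

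Now suppose $C \equiv D$, so $Z(C) = Z(D)$. Take any $g \in \linneg D$. Every $a \in Z(C) = Z(D)$ satisfies $g$, so $g$ is logically implied by the consistent set $\{f_i+\eqfalse\}$. By the cited fact, $g \in \linneg C$. By symmetry the reverse inclusion also holds, so $\linneg C = \linneg D$.

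The only delicate point is the consistency hypothesis in the cited fact. Without it the claim fails: an inconsistent set implies every equation, yet its span need not contain all equations. So I will explicitly invoke the no-tautology convention at that step.

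If one wanted the cited fact proved rather than assumed, I would argue as follows. The solution set of a consistent affine system is a coset $a_0 + W$, where $W$ is the kernel of the homogeneous part. An equation $(c, b)$ is implied exactly when $c \in W^{\perp}$ and $c \cdot a_0 = b$. Since $W^{\perp}$ is the row space of the homogeneous parts, $(c,b)$ is then a combination of the given rows, with the right-hand side forced to match by consistency.
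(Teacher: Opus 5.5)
Your proof is correct and follows the same argument as the paper. You identify each clause's falsifying set with the common solution set of its linear negation, which gives ($\Leftarrow$) immediately. For ($\Rightarrow$), you use the no-tautology convention to get consistency, and then the fact that a consistent system implies exactly its span; your only addition is a sketch of this duality fact (via $W^{\perp}$ being the row space), which the paper simply cites as standard.
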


\begin{proof}
    The set of assignments falsifying $C$ is exactly the set of common solutions to $\linneg{C}$. Thus, if $\linneg{C} = \linneg{D}$, then they have the same common solutions, thus $C \equiv D$. Conversely, if $C \equiv D$, then $\linneg{C}$ and $\linneg{D}$ have the same set of common solutions. Critically using that $C$ and $D$ are non-tautological so that $\linneg{C}$ and $\linneg{D}$ are consistent, we get that $\linneg{C} = \linneg{D}$; this is a standard property of linear algebraic duality.
\end{proof}

\begin{definition}[\cite{DBLP:conf/mfcs/ItsyksonS14}]
    The $\Res(\oplus)$ proof system has linear clauses as lines, and the following two inference rules:
    \begin{itemize}
        \item (resolution) From $C \lor f$ and $D \lor (f + \eqfalse)$, deduce $C \lor D$.
        \item (weakening) From $C$, deduce any $D$ such that $C \Rightarrow D$, equivalently $\linneg C \subseteq \linneg D$.
    \end{itemize}
\end{definition}

For example, one can weaken $\eq{x \oplus y = 1}$ to deduce $\eq{x = 1} \lor \eq{y = 1}$. The sheer number of possibilities allowed by weakening makes it difficult to create a solver that $p$-simulates $\Res(\oplus)$. Although there are results that show how weakening can be limited~\cite{DBLP:conf/mfcs/ItsyksonS14, DBLP:conf/scsquare/HoracekK18}, our key insight is a new equivalent ruleset for $\Res(\oplus)$ that has no weakening at all, which we state
in the following theorem. We will prove it after a necessary definition.

\begin{theorem} \label{thm:res-parity}
    The following two inference rules are polynomially equivalent to resolution and weakening for the purpose of refutation (i.e.,\ proving unsatisfiability).
    \begin{itemize}
        \item (addition) From $C \lor f$ and $D \lor g$, deduce $C \lor D \lor (f + g)$.
        \item (change of basis) From $C$, deduce any $D$ such that $C \equiv D$, equivalently $\linneg C = \linneg D$.
    \end{itemize}
    The simulation preserves graph structure, thus tree-like fragments also simulate each other. 
\end{theorem}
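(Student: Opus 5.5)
Both directions are simulated line by line, each inference rule being replaced by a constant number of steps of the other system. This gives polynomial equivalence and also preserves graph structure, so tree-like proofs stay tree-like. Non-tautological clauses are handled throughout via $\linneg{\cdot}$, and \cref{thm:res-parity}'s second rule is used as a semantic equivalence check justified by the Proposition ($C\equiv D$ iff $\linneg C=\linneg D$).

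\emph{Addition and change of basis simulate resolution and weakening.} Resolution is the special case $g = f+\eqfalse$ of addition. There $f+g=\eqfalse$, and $C\lor D\lor\eqfalse \equiv C\lor D$, so one change of basis removes the $\eqfalse$ disjunct. Weakening is the hard case, because the new system has no way to enlarge the set of falsifying assignments. This is why the claim is only for refutations. Given a weakening $C \vdash D$ with $\linneg C\subsetneq \linneg D$, I will not derive $D$ directly. Instead I will restructure the refutation so that weakenings are pushed downward and eventually absorbed. The plan is to prove, by induction along the original proof, the following invariant: for each derived line $E$ we can derive some $E'$ with $E'\Rightarrow E$, i.e.\ $\linneg E \subseteq \linneg{E'}$, using only addition and change of basis, with graph structure preserved.
\begin{itemize}
\item Hypotheses are their own $E'$.
\item For weakening, reuse the same $E'$.
\item For resolution of $C\lor f$ and $D\lor(f+\eqfalse)$ with strengthenings $A$ and $B$, take a basis of $\linneg A$. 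Either it avoids $f+\eqfalse$ in the span, in which case $A\Rightarrow C$ already and $A$ serves. Or it contains $f+\eqfalse$, and a change of basis rewrites $A$ as $A_0\lor f$ with $A_0\Rightarrow C$; similarly rewrite $B$ as $B_0\lor (f+\eqfalse)$ or use $B$ directly. Addition then yields $A_0\lor B_0\lor\eqfalse\equiv A_0\lor B_0 \Rightarrow C\lor D$.
\end{itemize}
At the root, $E'\Rightarrow\eqfalse$ forces $E'\equiv\eqfalse$, which gives a refutation.

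I expect the main obstacle to be the linear algebra in the resolution case. When $A\Rightarrow C\lor f$, we have $\linneg{C\lor f}=\linneg C+\langle f+\eqfalse\rangle\subseteq\linneg A$. The task is to show that $A$ can be rewritten as $A_0\lor f$ with $\linneg C\subseteq\linneg{A_0}$, or else that $A$ already implies $C$. Both alternatives need care with consistency: if $\linneg{A_0}$ were inconsistent, the clause $A_0$ would be tautological, so I must check that non-tautology is preserved and otherwise shortcut. The subtle cases are those where $f+\eqfalse\in\linneg C$. Choosing a complement of $\langle f+\eqfalse\rangle$ inside $\linneg A$ that contains a complement inside $\linneg C$ should handle these.

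\emph{Resolution and weakening simulate addition and change of basis.} Change of basis is a special case of weakening. For addition, from $C\lor f$ weaken to $C\lor D\lor (f+g)\lor (g+\eqfalse)$. This is sound because $f$ false and $g$ true would force $f+g$ true, and $\linneg{\cdot}$ containment is easily checked. Then resolve on $g$ with the weakening $D\lor g\Rightarrow C\lor D\lor(f+g)\lor g$. If $g+\eqfalse$ or $f+g$ makes a clause tautological, the target is itself implied by a premise, and a single weakening suffices. Each step uses constantly many lines per original line and the same premise structure, so tree-likeness is preserved.
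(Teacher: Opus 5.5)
Your architecture is the paper's. You convert line by line, keep hypotheses, absorb each weakening by reusing the already-converted stronger line, and replace each resolution by changes of basis plus one addition. Your simulation of addition by weakening and resolution is also fine. The gap is in the resolution step, in two places.

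First, your translation into linear negations is backwards. A stronger clause has a \emph{smaller} negation space, so $E'\Rightarrow E$ means $\linneg{E'}\subseteq\linneg{E}$, and $A\Rightarrow C\lor f$ means $\linneg A\subseteq\linspan(\linneg C,f+\eqfalse)$. Your complement-choosing argument rests on the reversed inclusion. It would only show that you derive something \emph{weaker} than each line, which gives nothing at the root.

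Second, with the inclusion the right way round, your dichotomy ``either $f+\eqfalse\in\linneg A$, or $A\Rightarrow C$'' is false, and the missed case is the typical one. In \cref{ex:new-rules}, $A=\eq{x\oplus a=1}\lor\eq{y\oplus a=1}$ implies $\eq{x=1}\lor\eq{y=1}\lor\eq{a=1}$ and $\eq{a=0}\notin\linneg A$. Yet $A\not\Rightarrow\eq{x=1}\lor\eq{y=1}$: take $x=y=0$, $a=1$. So you generally cannot make the cancelled disjuncts literally $f$ and $f+\eqfalse$, and the addition will not produce an $\eqfalse$ disjunct. The case $f+\eqfalse\in\linneg C$ that you flag as subtle is actually trivial, since then $C\lor f\equiv C$.

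The missing idea is to cancel $f$ only modulo the conclusion. Suppose $\linneg A\not\subseteq\linneg C$. Any element of $\linneg A\setminus\linneg C$ has the form $c+f+\eqfalse$ with $c\in\linneg C$; fix one. Then $\linneg A=\linspan(\linneg A\cap\linneg C,\ c+f+\eqfalse)$, so change of basis gives $A\equiv A_0\lor(f+c)$ with $A_0\Rightarrow C$. Symmetrically, $B\equiv B_0\lor(f+\eqfalse+d)$ with $d\in\linneg D$ and $B_0\Rightarrow D$. Addition yields $A_0\lor B_0\lor(c+d+\eqfalse)$. Its linear negation $\linspan(\linneg{A_0},\linneg{B_0},c+d)$ lies in $\linspan(\linneg C,\linneg D)=\linneg{C\lor D}$. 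In particular it is consistent, so the result is a legal line that implies $C\lor D$.

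This is why the paper isolates $f$ with respect to a basis $U$ of the \emph{joint} space $\linneg{\hat C\lor\hat D}$ (your $\linneg{C\lor D}$). The leftover $c+d$ lies in that joint span but typically in neither $\linneg C$ nor $\linneg D$ alone. In the example the result is $\eq{x\oplus y=1}\lor\eq{z\oplus w=1}\lor\eq{y\oplus w=1}$, strictly stronger than the resolvent yet with no $\eqfalse$ disjunct. With this replacing your resolution case, the rest of your argument, including the root step, goes through.
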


Note that addition generalizes resolution by taking $g = f + \eqfalse$. One may still be concerned because the number of possible change of basis operations remains exponential in the width (number of equations) of the clause. However, our algorithms will be efficient because it suffices to use change of basis in a very specific way that is also used in proving \cref{thm:res-parity}.

\begin{definition} \label{def:uses}
    Let $U \cup \{f\}$ be a set of equations and let $C$ be a clause. 
    We say that $C$ \emph{uses $f$ (with respect to $U$)} if either $\linneg C \subseteq \linspan(U, f)$ or $\linneg C \subseteq \linspan(U, f+ \eqfalse)$, but $\linneg C \not \subseteq \linspan(U)$.
    If $C$ uses $f$, \emph{isolating $f$} refers to changing the basis of $C$ to $C' \lor g$, where $C \equiv C' \lor g$ and $\linneg{C'} \subseteq \linspan(U)$; we say that $f$ is \emph{isolated (with respect to $U$)} afterwards. 
\end{definition}

For example, take $U = \{\eq{x = 0}, \eq{y = 0}\}$, $f = \eq{a = 0}$, and $C = \eq{x \oplus a = 1} \lor \eq{y \oplus a = 1}$. One can check that $C$ uses $f$ w.r.t. $U$, but $f$ is not isolated because it appears in both equations of $C$. To isolate $f$, because $\linspan(\eq{x \oplus a = 0}, \eq{y \oplus a = 0}) = \linspan(\eq{x \oplus y = 0}, \eq{y \oplus a = 0})$, the clause $\eq{x \oplus y = 1} \lor \eq{y \oplus a = 1}$ isolates $f$ with respect to $U$. 

More generally, if $C$ uses $f$ with respect to $U$, let $\linspan(f_1, \dots, f_k, f_{k+1}, \dots, f_m)$ be any basis of $\linneg{C}$, where the $f_i$ are ordered so that $f_i \in \linspan(U)$ iff $i \le k$. (In other words, $f_{k+1}, \dots, f_m$ involve the extra equation $f$.) Then the following isolates $f$.
\begin{equation*}
    \underbrace{(f_1 + \eqfalse) \lor \dots \lor (f_k + \eqfalse) \lor (f_{k+1} + f_m + \eqfalse) \lor \dots \lor (f_{m-1} + f_m + \eqfalse)}_{\displaystyle C'} \lor \underbrace{(f_m + \eqfalse)}_{\displaystyle g \vphantom{C}}
\end{equation*}

\begin{proof}[Proof of \cref{thm:res-parity}] 
    First, change of basis is a restricted form of weakening, and addition is simulated by weakening $C \lor f$ into $C \lor (f + g) \lor (g + \eqfalse)$, then resolving with $D \lor g$ to get $C \lor D \lor (f + g)$. This proves the soundness of these inference rules, as well as one direction of the simulation.

    In the other direction, we will convert any proof using resolution/weakening to a proof using addition/change of basis. Proceeding in topological order, we will maintain the structure and convert individual lines, where each new line will be at least as strong as the corresponding old one, which is sufficient for refutation. The base cases are the starting clauses, which remain the same.
    
    In the inductive step, consider the first not-yet-converted clause derived via resolution: $\hat C \lor \hat D$ derived from $\hat C \lor (f + \eqfalse)$ and $\hat D \lor f$. Then, clauses $\hat C \lor (f + \eqfalse)$ and $\hat D \lor f$ must be derived by sequences of weakenings from already-converted clauses $C$ and $D$, respectively, using the fact that each original clause can be seen as a weakening of its corresponding converted clause. In other words, we have the following picture:
    % Collapsing these sequences, we get the following picture:
    \begin{prooftree}
        \AxiomC{$C$}
        \RightLabel{\proofreason{weak}}
        \UnaryInfC{$\hat C \lor (f + \eqfalse)$}
        \AxiomC{$D$}
        \RightLabel{\proofreason{weak}}
        \UnaryInfC{$\hat D \lor f$}
        \RightLabel{\proofreason{res}}
        \BinaryInfC{$\hat C \lor \hat D$} 
    \end{prooftree}
    
    Let $U$ be a basis of $\linneg{\hat C \lor \hat D }=\linspan(\linneg{\hat C},\linneg{\hat D})$. 
    Since $\hat C \lor (f+\eqfalse)$ can be derived from $C$ by a sequence of weakenings, we have $\linneg C \subseteq \linspan(\linneg{\hat C}, f) \subseteq \linspan(U, f)$. 
    We may also assume that $\linneg{C} \not \subseteq \linspan(U)$; otherwise, take $C$ itself as the conversion of $\hat C \lor \hat D$. Thus, $C$ uses $f$ with respect to $U$. One similarly sees that $D$ uses $f$.

    This setup lets us isolate $f$ in both $C$ and $D$, creating clauses $C' \lor g$ and $D' \lor h$ such that $C \equiv C' \lor g$, $D \equiv D' \lor h$, and $\linneg{C'}, \linneg{D'} \subseteq \linspan(U)$. (This is change of basis.) Then, we apply the addition rule on $C' \lor g$ and $D' \lor h$, resulting in $C' \lor D' \lor (g + h)$. 
    
    By construction, the coefficient of $f$ is 1 when $g + \eqfalse$ is written in the $U \cup \{f\}$ basis, and similarly the coefficient of $f + \eqfalse$ is 1 when $h + \eqfalse$ is written in the $U \cup \{f + \eqfalse\}$ basis. Thus, $g + h + \eqfalse \in U$, and $\linneg{C' \lor D' \lor (g + h)} \subseteq \linspan(U)$ which equals $\linneg{\hat C \lor \hat D }$. 
    Therefore $C' \lor D' \lor (g + h)$ is at least as strong as $\hat C \lor \hat D$ as required. 
\end{proof}

The key steps in the proof above are shown in the example below. Intuitively, the equation that we isolate and cancel out via addition is analogous to a variable being resolved on. 

\begin{example} \label{ex:new-rules}
    Consider the $\Res(\oplus)$ proof below whose resolution step is highlighted.
    \begin{prooftree}
        \AxiomC{$\eq{x \oplus a = 1} \lor \eq{y \oplus a = 1}$}
        \RightLabel{\proofreason{weak}}
        \UnaryInfC{$\eq{x = 1} \lor \eq{y = 1} \lor \highlight{\eq{a = 1}}$}
        \AxiomC{$\eq{z \oplus a = 0} \lor \eq{w \oplus a = 0}$}
        \RightLabel{\proofreason{weak}}
        \UnaryInfC{$\eq{z = 1} \lor \eq{w = 1} \lor \highlight{\eq{a = 0}}$}
        \RightLabel{\proofreason{res}}
        \BinaryInfC{$\eq{x = 1} \lor \eq{y = 1} \lor \eq{z = 1} \lor \eq{w = 1}$}
    \end{prooftree}
    Weakening allowed the proof to resolve $\eq{a = 0}$ and $\eq{a = 1}$. Instead, in the revised proof below, we will isolate $a$ on both sides, allowing it to be canceled out by addition.
    
    \begin{prooftree}
        \AxiomC{$\eq{x \oplus a = 1} \lor \eq{y \oplus a = 1}$}
        \RightLabel{\proofreason{change-basis}}
        \UnaryInfC{$\eq{x \oplus y = 1} \lor \highlight{\eq{y \oplus a = 1}}$}
        \AxiomC{$\eq{z \oplus a = 0} \lor \eq{w \oplus a = 0}$}
        \RightLabel{\proofreason{change-basis}}
        \UnaryInfC{$\eq{z \oplus w = 1} \lor \highlight{\eq{w \oplus a = 0}}$}
        \RightLabel{\proofreason{add}}
        \BinaryInfC{$\eq{x \oplus y = 1} \lor \eq{z \oplus w = 1} \lor \highlight{\eq{y \oplus w = 1}}$}
    \end{prooftree}
    % On the left, the linear negation of the starting clause is $\linspan(\eq{x \oplus a = 0}, \eq{y \oplus a = 0})$, and we change it to $\linspan(\eq{x \oplus y = 0}, \eq{y \oplus a = 0})$, isolating $a$. The right side is similar. In the end, addition cancels out $a$. 
    Though this final result is not exactly identical to the result from the original $\Res(\oplus)$ proof, it implies the original conclusion and is therefore better.
\end{example}

Finally, we make a note that one can formalize the idea that change of basis is only used to isolate, and that addition is only used to cancel out. We call this \emph{affine resolution}.

\begin{definition} \label{def:aff-res}
The \emph{affine resolution} rule may be applied to clauses $C$ and $D$ whenever there exists $U \cup \{f\}$ such that both $C$ and $D$ use $f$, and specifically $\linneg C \subseteq \linspan(U, f)$ while $\linneg D \subseteq \linspan(U, f+\eqfalse)$ or vice versa. The rule derives any clause with linear negation $\linspan(\linneg C, \linneg D, \eqfalse) \cap \linspan(U)$.
\end{definition}

\begin{remark}
    The affine resolution rule by itself is polynomially equivalent to $\Res(\oplus)$.
\end{remark}

The justification of this equivalence is that although the definition of affine resolution is abstract, it is identical to the process we have been describing. In particular, both directions of simulation are justified by the following proposition.

\begin{proposition} \label{prop:aff-res}
    The result of affine resolution may be obtained by isolating $f$ to get $C \equiv C' \lor g$ and $D \equiv D' \lor h$, where $\linneg{C'}, \linneg{D'} \subseteq \linspan(U)$, then taking $C' \lor D' \lor (g + h)$. In other words,
    \begin{equation*}
        \linneg{C' \lor D' \lor (g + h)} = \linspan(\linneg C, \linneg D, \eqfalse) \cap \linspan(U).
    \end{equation*}
\end{proposition}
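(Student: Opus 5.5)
The plan is a direct two-inclusion computation in $\mathbb{F}_2^{n+1}$, after first writing out what isolation gives us. By the definition of isolation we have $\linneg C = \linspan(\linneg{C'}, g+\eqfalse)$ with $\linneg{C'} \subseteq \linspan(U)$. Since $\linneg C \subseteq \linspan(U,f)$ but $\linneg C \not\subseteq \linspan(U)$, we get $g+\eqfalse \in \linspan(U,f)\setminus\linspan(U)$. Hence $g + \eqfalse = u_1 + f$ for some $u_1 \in \linspan(U)$. Symmetrically, $\linneg D = \linspan(\linneg{D'}, h+\eqfalse)$ and $h+\eqfalse = u_2 + f + \eqfalse$ with $u_2\in\linspan(U)$. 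We also record two non-degeneracy facts. First, $f\notin\linspan(U)$, since otherwise $\linspan(U,f)=\linspan(U)$ and $C$ would not use $f$. Second, $f+\eqfalse\notin\linspan(U)$, for the same reason applied to $D$. Finally, $\linneg{C'\lor D'\lor(g+h)} = \linspan(\linneg{C'},\linneg{D'}, g+h+\eqfalse)$ by definition.

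\emph{Inclusion $\subseteq$.} Compute $g+h+\eqfalse = (g+\eqfalse)+(h+\eqfalse)+\eqfalse = (u_1+f)+(u_2+f+\eqfalse)+\eqfalse = u_1+u_2 \in \linspan(U)$. The same identity shows $g+h+\eqfalse \in \linspan(\linneg C,\linneg D,\eqfalse)$. Moreover, $\linneg{C'}\subseteq\linneg C\cap\linspan(U)$ and likewise $\linneg{D'}\subseteq\linneg D\cap\linspan(U)$. So every generator of the left side lies in the right side.

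\emph{Inclusion $\supseteq$.} Take $x \in \linspan(\linneg C,\linneg D,\eqfalse)\cap\linspan(U)$ and write $x = c' + d' + a(g+\eqfalse) + b(h+\eqfalse) + e\,\eqfalse$ with $c'\in\linneg{C'}$, $d'\in\linneg{D'}$ and $a,b,e\in\mathbb{F}_2$. Substituting the expressions above gives
\[
x + c' + d' + a u_1 + b u_2 = (a+b) f + (b+e)\eqfalse .
\]
The left side lies in $\linspan(U)$, so the right side does too. If $a\neq b$, the right side is $f$ or $f+\eqfalse$, and either case contradicts the non-degeneracy facts. So $a=b$. Then $a(g+\eqfalse)+a(h+\eqfalse) = a(g+h+\eqfalse) + a\,\eqfalse$. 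Hence $x = c'+d'+a(g+h+\eqfalse) + (a+e)\eqfalse$, and therefore $(a+e)\eqfalse\in\linspan(U)$.

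The one delicate point, which I expect to be the main obstacle, is ruling out $a+e=1$. That case requires $\eqfalse\in\linspan(U)$. I would handle it by using that in affine resolution $U$ is consistent, i.e.\ $\eqfalse\notin\linspan(U)$. This is the setting of the proof of \cref{thm:res-parity}, where $\linspan(U)$ is the linear negation of a non-tautological clause. It is also forced by the requirement that the rule derive a (non-tautological) clause: otherwise the right-hand side contains $\eqfalse$ and no clause has that linear negation. I would state this assumption explicitly. With it, $a+e=0$ and $x\in\linspan(\linneg{C'},\linneg{D'},g+h+\eqfalse)$, which completes the equality.
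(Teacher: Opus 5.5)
Your proof is correct and is essentially the paper's argument: you use the same expressions $g+\eqfalse = u+f$ and $h+\eqfalse = v+f+\eqfalse$, the same generator check for $\subseteq$, and the same coefficient comparison for $\supseteq$. The one difference is that you state and justify the facts $f,\ f+\eqfalse,\ \eqfalse \notin \linspan(U)$. The paper uses these tacitly when it says the only combination of $f$ and $\eqfalse$ lying in $\linspan(U)$ is the trivial one. Your argument for $\eqfalse\notin\linspan(U)$, namely that it is forced by the derived clause being non-tautological, is a correct and genuinely needed repair: without it the right-hand side contains $\eqfalse$ and the stated equality can fail.
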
 
\begin{proof}
    Without loss of generality, assume that $\linneg{C} \subseteq \linspan(U, f)$ and that $\linneg{D} \subseteq \linspan(U, f + \eqfalse)$. Then by construction of isolation, there exist $u, v \in \linspan(U)$ such that $g + \eqfalse = u + f$ and $h + \eqfalse = v + (f + \eqfalse)$, which simplifies to $h = v + f$. 

    ($\subseteq$) Denote the RHS by $V$. It suffices to show that $\linneg{C'} \subseteq V$, $\linneg{D'} \subseteq V$, and $g + h + \eqfalse \in V$. The spaces $\linneg{C'}$ and $\linneg{D'}$ belong to $U$ by definition of isolation, and also belong to $\linneg{C}$ or $\linneg{D}$ respectively, and thus belong to $V$. Lastly, 
    \begin{equation*}
        g + h + \eqfalse = (g + \eqfalse) + (h + \eqfalse) + \eqfalse \in \linspan(\linneg C, \linneg D, \eqfalse)
    \end{equation*}
    and
    \begin{equation*}
        g + h + \eqfalse = (u + f) + (v + f) = u + v \in \linspan(U).
    \end{equation*}

    ($\supseteq$) Take any $f^*$ in the RHS. Because $\linneg{C} = \linspan(\linneg{C'}, u + f)$ and $\linneg{D} = \linspan(\linneg{D'}, v + f + \eqfalse)$, we can write
    \begin{equation*}
        f^* = c + \alpha(u + f) + d + \beta(v + f + \eqfalse) + \gamma(\eqfalse)
    \end{equation*}
    for some $c \in \linneg{C'}$, $d \in \linneg{D'}$, and $\alpha, \beta, \gamma \in \{0, 1\}$. Because $f^* \in \linspan(U)$ as well, the above reduces to $(\alpha + \beta)f + (\beta + \gamma)\eqfalse \in \linspan(U)$. The only linear combination of $f$ and $\eqfalse$ in $U$ is the zero linear combination, thus $\alpha = \beta = \gamma$. Because the coefficients of $g$, $h$ and $\eqfalse$ are the same, we conclude that $f^* \in \linspan(\linneg{C'}, \linneg{D'}, g + h + \eqfalse)$, which is the LHS.
\end{proof}

\section{The \texorpdfstring{$\CDCL(\oplus)$}{CDCL(⊕)} Algorithm} \label{sec:cdcl-parity}

In this section, we define an algorithm, $\CDCL(\oplus)$, that mimics classical CDCL in the $\Res(\oplus)$ proof system. Its basic form is given in \cref{alg:cdcl-parity}, calling subroutines for unit propagation and clause learning, as well as decision heuristics. The correctness of this algorithm is argued identically to the correctness of classical CDCL. Note that we include optional restarts in this basic form (i.e.\ deleting all decisions made so far) because they will be necessary for \cref{thm:full-simulation}, on the simulation of $\Res(\oplus)$ by $\CDCL(\oplus)$. 

In the rest of this section, we will explain the unit propagation and clause learning subroutines. Our definitions of these concepts will look similar to classical CDCL, but there is actually a more precise connection. In \cref{sec:classic}, we exactly characterize classical CDCL from a semantic perspective, instead of a syntactic perspective (manipulating literals) or operational perspective (via algorithms). The reader interested in the motivations for the definitions in this section should take a look, but doing so is not necessary to understand the content of this section.

\begin{algorithm}[t] \label{alg:cdcl-parity}
\DontPrintSemicolon
\BlankLine
\KwIn{A set of linear clauses $\Delta$}
\BlankLine
Initialize $U \leftarrow \emptyset$, $R \leftarrow \emptyset$. \tcp*{True units and corresponding reason clauses}

\While{true}{
    Run unit propagation on $\Delta$ (\cref{ssec:unit-prop}) and update $U$ and $R$ accordingly.\;

    \If{$\eqfalse \in \linspan(U)$}{
        \If{$U$ contains a decision}{
            Find an asserting clause $C$ (\cref{ssec:clause-learning}) and add it to $\Delta$.\;
            Revert $U$ and $R$ to the asserting level of $C$.\;
        }
        \Else{
            \Return{``unsatisfiable''}\;
        }
    }
    \Else{
        \If{$|U| < n$}{
            Restart or decide any $f$ with 
            $f,\, f+\eqfalse \notin \linspan(U)$ (\cref{ssec:decision}) to add to $U$.\;
        }
        \Else{
            \Return{``satisfiable'' \textrm{and} the solution to the system of equations $U$}\;
        }
    }
}
\caption{Basic $\CDCL(\oplus)$}
\end{algorithm}

\subsection{Unit propagation} \label{ssec:unit-prop}
Let $U$ be a list of equations (i.e.\ units) and $\Delta$ be a set of clauses. We will ensure that $U$ is always linearly independent (analogous to not repeating literals), and is consistent until the conflict is deduced (analogous to never having both a literal and its negation).

Because we want $\CDCL(\oplus)$ to be \emph{semantically} similar to CDCL, we want unit propagation to deduce all units that are \emph{logically implied} by the combination of a clause and some known units. This will allow us to derive more units than if we tried to be \emph{operationally} similar; i.e., checking if all but one equation in a clause consists of negations of known units.

\begin{example} \label{ex:unitprop}
    Suppose that $U = \{\eq{x = 0}, \eq{y = 0}\}$, and we are processing the clause $C = \eq{x \oplus y \oplus z = 1} \lor \eq{z = 1}$. By plugging the known information into the clause, we get $\eq{z = 1} \lor \eq{z = 1}$, so the unit $\eq{z = 1}$ is true. This shows that if we want to determine if a clause reduces to a unit, it is insufficient to simply check for negations of known units, and some linear algebraic reasoning is necessary.

    Also, because we deduce $\eq{z = 1}$ and already know $U$, the result is that we know all equations in $\linspan(U, \eq{z = 1}) = \linspan(\eq{x = 0}, \eq{y = 0}, \eq{z = 1})$ to be true, such as $\eq{x \oplus y \oplus z = 1}$. Then, instead of choosing to deduce $\eq{z = 1}$, one can see that it would have been equivalent to deduce from $C$ any equation in $\linspan(U, \eq{z = 1}) \setminus \linspan(U)$, since it would produce the same span when added to $U$.
\end{example}

The definition below incorporates the considerations in \cref{ex:unitprop} at a high level, and coincides with the definition of unit propagation for the SRES proof system from~\cite{DBLP:phd/dnb/Horacek20}.

\begin{definition} \label{def:unitprop}

From a set of starting units $U$, \emph{unit propagation on a clause $C$} refers to:
\begin{enumerate}
    \item \label{contradiction-step} Deducing the contradiction $\eqfalse$ if $\linneg C \subseteq \linspan(U)$.
    \item \label{unit-prop-step} Otherwise, deducing an equation $f$ if $\linneg C \subseteq \linspan(U, f + \eqfalse)$ and $f,\, f+\eqfalse \not \in \linspan(U)$. 
\end{enumerate}
We call $C$ the \emph{reason clause} for $f$, if $f$ is deduced (or if $\eqfalse$ is deduced) as above. 

Note that the definition relies only $\linspan(U)$, not the exact units in $U$. Thus, we may refer to a starting subspace (rather than a starting set of units), since every basis would produce the same propagation.
\end{definition}

As written, the definition checks exponentially many equations $f$ and is also non-deterministic, allowing many different $f$ to be deduced. This is not a problem, as we provide a polynomial-time, canonically deterministic algorithm in \cref{alg:unitprop}. 

\begin{algorithm}[t] \label{alg:unitprop}
\DontPrintSemicolon
\BlankLine
\KwIn{Set of consistent starting units $U$ and clause $C$}
\BlankLine
Write $\linneg{C} = \linspan(f_1, \dots, f_m)$.\;
Write $U$ in row-echelon form (i.e.\ each equation as a row, with unique leading terms).\;
Define $g_i \gets f_i$ reduced by $U$ for each $i$ (i.e.\ eliminate all variables in $f_i$ that are leading variables in $U$, by adding equations from $U$).\; 
\BlankLine
\If{some $g_i$ is $\eqfalse$}{\Return{no propagation}}
\ElseIf{every $g_i$ is $\eqtrue$}{\Return{$\eqfalse$}}
\ElseIf{there exists $k$ such that every $g_i$ is either $\eqtrue$ or identically $g_k$}{\Return{$f_k + \eqfalse$}}
\Else{\Return{no propagation}}
\caption{Unit propagation on a particular clause}
\end{algorithm}

\begin{proposition} \label{prop:eff-up} \label{prop:up-wd}
    The following properties of unit propagation hold: 
    \begin{enumerate}
        \item If unit propagation on a clause $C = f_1 + \eqfalse \lor \dots \lor f_m + \eqfalse$ permits deducing an equation, then it permits deducing some $f_i + \eqfalse$ in particular (for poly-time deduction).
        \item If $f$ and $g$ can be deduced, then $\linspan(U, f) = \linspan(U, g)$ (for canonical determinism).
    \end{enumerate}
    These show that \cref{alg:unitprop} implements \cref{def:unitprop} up to equivalent span.
\end{proposition}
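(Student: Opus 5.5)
The plan is to work with the quotient space $W = \mathbb{F}_2^{n+1}/\linspan(U)$ and write $\bar v$ for the image of a vector $v$. Since $U$ is consistent, $\eqfalse \notin \linspan(U)$, so $\bar\eqfalse \neq 0$. The conditions in \cref{def:unitprop} translate as follows. The contradiction step fires exactly when every $\bar f_i = 0$, writing $\linneg C = \linspan(f_1,\dots,f_m)$ with $f_i$ the negated equations of the clause. An equation $f$ can be deduced exactly when $\bar f \notin \{0, \bar\eqfalse\}$, not every $\bar f_i$ is $0$, and every $\bar f_i \in \linspan(\overline{f+\eqfalse}) = \{0, \bar f + \bar\eqfalse\}$.

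Once this is set up, both claims become one-dimensional observations about the image of $\linneg C$ in $W$. Suppose $f$ is deducible. Then the image of $\linneg C$ is exactly the line $\{0, \bar f + \bar \eqfalse\}$: it sits inside that line, and it is nonzero because step \ref{contradiction-step} did not fire.

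For property 1, pick an index $i$ with $\bar f_i \neq 0$. Then $\bar f_i = \bar f + \bar\eqfalse$, so $\overline{f_i+\eqfalse} = \bar f$. I then check the three conditions for $f_i + \eqfalse$.
\begin{enumerate}
\item $\linneg C \subseteq \linspan(U, f_i)$, because every $\bar f_j$ lies in $\{0, \bar f_i\}$.
\item $f_i + \eqfalse \notin \linspan(U)$, because $\bar f \neq 0$.
\item $f_i \notin \linspan(U)$, because $\bar f_i \neq 0$.
\end{enumerate}
So $f_i + \eqfalse$ is deducible.

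For property 2, suppose $f$ and $g$ are both deducible. The same nonzero line equals both $\{0,\bar f+\bar\eqfalse\}$ and $\{0,\bar g+\bar\eqfalse\}$, which forces $\bar f = \bar g$. Hence $f \in \linspan(U,g)$ and $g \in \linspan(U,f)$, so $\linspan(U,f)=\linspan(U,g)$.

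Finally I match \cref{alg:unitprop} to the definition. Reducing $f_i$ by the row-echelon form of $U$ yields a canonical representative $g_i$ of $\bar f_i$: two vectors reduce to the same thing exactly when they differ by an element of $\linspan(U)$. In particular $g_i = \eqtrue$ iff $\bar f_i = 0$, and $g_i = \eqfalse$ iff $\bar f_i = \bar\eqfalse$.

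I then go through the algorithm's branches.
\begin{itemize}
\item \emph{Some $g_i = \eqfalse$.} Then the line $\{0, \bar f+\bar\eqfalse\}$ would contain $\bar\eqfalse$, forcing $\bar f = 0$, which is excluded. So nothing is deducible, and $\eqfalse$ is not deducible either since not all $\bar f_i = 0$.
\item \emph{All $g_i = \eqtrue$.} This is exactly the contradiction step.
\item \emph{All nonzero $g_i$ equal some $g_k$.} By the argument for property 1, $f_k+\eqfalse$ is deducible. Note that $g_k \neq \eqfalse$ holds here, because the first branch was not taken.
\item \emph{Otherwise.} Two nonzero images differ, so the image of $\linneg C$ is not a line and nothing is deducible.
\end{itemize}
Property 2 then shows the output agrees with \cref{def:unitprop} up to equivalent span.

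The step needing the most care is the bookkeeping of the affine constant $\eqfalse$. In particular, the case $g_i = \eqfalse$ has to be excluded correctly, since it corresponds to a deduced $f$ lying in $\linspan(U)$. I would double-check that step against the nondegeneracy conditions $f, f+\eqfalse \notin \linspan(U)$.
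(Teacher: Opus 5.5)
Your proof is correct and follows essentially the same route as the paper. Your quotient images $\bar f_i \in \{0, \bar f + \bar\eqfalse\}$ are exactly the paper's decomposition $f_i = u_i + c_i(f+\eqfalse)$ with $u_i \in \linspan(U)$, your ``same line'' argument for property 2 is the paper's choice of $h \in \linneg C \setminus \linspan(U)$ written in both bases, and your branch-by-branch check (with the explicit justification that row-echelon reduction yields canonical coset representatives) covers the same two directions of algorithm correctness the paper verifies, somewhat more cleanly.
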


\begin{proof}
    Recall that we deduce $f$ if and only if (a) $\linneg{C} \subseteq \linspan(U, f + \eqfalse)$, (b) $f \not \in \linspan(U)$, (c) $f + \eqfalse \not \in \linspan(U)$, and (d) $\linneg{C} \not \subseteq \linspan(U)$. 
    
    For (1), write $f_1, \dots, f_m$ in the $U \cup \{f + \eqfalse\}$ basis, meaning $f_i = u_i + c_i(f + \eqfalse)$ for some $u_i \in \linspan(U)$ and $c_i \in \{0, 1\}$ (call this property ($*$)). By (d), we get that $c_k = 1$ for some $k$. Fix such a $k$; we claim that $f_k + \eqfalse$ can be deduced. In particular, (d) is still true, and conditions (a), (b), and (c) for deducing $f_k + \eqfalse$ follow directly from ($*$) and conditions (a), (b), and (c) for deducing $f$.

    For (2), by (d) we can choose some $h \in \linneg{C} \setminus \linspan(U)$. Then $h = u + c(f + \eqfalse)$ and $h = v + d(g + \eqfalse)$ for some $u, v \in \linspan(U)$ and $c, d \in \{0, 1\}$, by (a). If $c = 0$ or $d = 0$, then $h \in U$, which is a contradiction; hence $c = d = 1$ and $u + f + \eqfalse = v + g + \eqfalse$. We conclude that $f \in \linspan(U, g)$ and $g \in \linspan(U, f)$, proving the claim.

    To conclude that \cref{alg:unitprop} is correct, let $g_i$ be $f_i$ reduced by $U$ as defined in the algorithm. We first show that if unit propagation is possible then we indeed propagate. If $\linneg{C} \subseteq \linspan(U)$, then each $g_i = \eqtrue$ and the algorithm correctly returns $\eqfalse$ (contradiction). Then, if (a)--(d) hold, by (1) we may take $f = f_k$ for some $k$. The algorithm first checks if some $g_i$ is $\eqfalse$. This cannot happen, because if $g_\ell = \eqfalse$, by (2) we must have $g_k = \eqfalse$, contradicting (c). The algorithm then checks if every $g_i$ is $\eqtrue$, which cannot happen by (d). The algorithm finally checks if every $g_i$ is either $\eqtrue$ or identically $g_k$, which passes by (a). The result is canonical up to equivalent span by (2).

    Conversely, to show that if the algorithm propagates then it is a valid unit propagation, it is easy to first see that if each $g_i = \eqtrue$, then $\linneg{C} \subseteq \linspan(U)$. Next, suppose that there exists a $k$ such that every $g_i$ is either $\eqtrue$ or identically $g_k$ (call this property ($\dagger$)), and we do not enter the earlier cases of some $g_i = \eqfalse$ or all $g_i = \eqtrue$. Property ($\dagger$) immediately implies that (a) is true for $f_k$. Property (b) is true because if, for contradiction, $g_k$ were $\eqtrue$, then ($\dagger$) implies that every $g_i$ is $\eqtrue$, which is a case we did not enter. Property (c) is true because $f_k + \eqfalse \in \linspan(U)$ implies $g_k = \eqfalse$, which is another case we did not enter. Property (d) is also true because we did not enter the case of all $g_i = \eqtrue$.
    \end{proof}

Note that our actual implementation of unit propagation includes a few optimizations over \cref{alg:unitprop}. For one, we keep $U$ in row-echelon form between runs instead of recomputing it each time. This is also why the formal definition of unit propagation is nondeterministic, allowing us to choose equivalent equations to learn as long as we maintain the same span. Finally, the definition considers arbitrary $f$ instead of only equations appearing in the clause $C$, since this property is important for establishing the semantic characterization below. 

\begin{proposition} \label{prop:semantic}
    Let $U$ be a consistent set of units. Unit propagation on a clause $C$ 
    \begin{itemize}
        \item deduces the contradiction $\eqfalse$ if and only if $C \land U \Rightarrow \eqfalse$, and
        \item deduces an equation $f$ if and only if $C \land U \Rightarrow f$, $C \land U \not \Rightarrow \eqfalse$, and  $f,\, f+\eqfalse \not \in \linspan(U)$.
    \end{itemize}
\end{proposition}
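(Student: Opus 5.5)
Both items reduce to translating logical implication between linear clauses and consistent systems of equations into span inclusions. The tool is the standard duality fact already used in the proof that $C \equiv D$ iff $\linneg C = \linneg D$. For a consistent subspace $W$ of equations and an equation $h$, every solution of $W$ satisfies $h$ iff $h \in W$. Equivalently, if $W$ is consistent, then $W$ has a solution violating $h$ iff $h + \eqfalse$ can be added consistently, i.e.\ $h \notin W$. We also use the observation that the assignments falsifying $C$ are exactly the common solutions of $\linneg C$.

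For the first item, $C \land U \Rightarrow \eqfalse$ means no assignment satisfies $U$ while satisfying $C$. Equivalently, every solution of $\linspan(U)$ falsifies $C$, i.e.\ satisfies every generator $f_i + \eqfalse$ of $\linneg C$. Since $U$ is consistent, duality gives that this holds iff each $f_i + \eqfalse \in \linspan(U)$, i.e.\ $\linneg C \subseteq \linspan(U)$. This is exactly step~\ref{contradiction-step} of \cref{def:unitprop}.

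For the second item, assume $f, f+\eqfalse \notin \linspan(U)$. Then $\linspan(U, f+\eqfalse)$ is consistent, since $f \notin \linspan(U)$ means $U$ does not imply $f$. Now $C \land U \Rightarrow f$ says that every solution of $U$ satisfying $f + \eqfalse$ falsifies $C$. In other words, every common solution of the consistent space $\linspan(U, f+\eqfalse)$ satisfies every element of $\linneg C$. By duality this holds iff $\linneg C \subseteq \linspan(U, f+\eqfalse)$, which is condition (a) in the proof of \cref{prop:eff-up}. The extra hypothesis $C \land U \not\Rightarrow \eqfalse$ corresponds, by the first item, to $\linneg C \not\subseteq \linspan(U)$. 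This is precisely the ``otherwise'' clause of step~\ref{unit-prop-step}.

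Assembling both directions is then immediate: deducing $f$ per \cref{def:unitprop} means (a)–(d), and each of these has just been matched with one of the semantic conditions. The only delicate point is ensuring consistency before invoking duality. This is why the hypothesis $f \notin \linspan(U)$ is needed: it guarantees $\linspan(U, f + \eqfalse)$ is consistent, since otherwise $\eqfalse$ would lie in that span, forcing $f \in \linspan(U)$. I expect that bookkeeping, rather than any substantive step, to be the main thing to get right.
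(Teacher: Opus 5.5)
Your proof is correct and takes essentially the same approach as the paper. Both arguments rest on the duality fact that a consistent system implies exactly the equations in its span. Both apply it to $\linspan(U)$ for the first bullet, and both use $f \notin \linspan(U)$ to get consistency of $\linspan(U, f+\eqfalse)$, so that condition (a) matches $C \land U \Rightarrow f$ while (b)--(d) are matched directly.
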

\begin{proof}
    The key fact that we will use is that for consistent $U$, we have $U \Rightarrow g$ if and only if $g \in \linspan(U)$. Call this fact ($*$). Denote $\bigwedge V = \bigwedge_{g \in V} g$.
    
    For the first bullet, unit propagation deduces $\eqfalse$ if $\linneg{C} \subseteq \linspan(U)$ by definition. By ($*$), $\linneg{C} \subseteq \linspan(U)$ if and only if $U \Rightarrow \bigwedge \linneg{C}$, if and only if $(\lnot \bigwedge \linneg{C}) \land U \Rightarrow \eqfalse$.
    Now, we have 
    $\bigwedge \linneg{C} \equiv \lnot C$ and hence
    this last condition
    simplifies to $C \land U \Rightarrow \eqfalse$. 
    
    For the second bullet, recall that we deduce $f$ if and only if (a) $\linneg{C} \subseteq \linspan(U, f + \eqfalse)$, (b) $f \not \in \linspan(U)$, (c) $f + \eqfalse \not \in \linspan(U)$, and (d) $\linneg{C} \not \subseteq \linspan(U)$. Then (d) is equivalent to $C \land U \not \Rightarrow \eqfalse$ by the first bullet, and $f,\, f+\eqfalse \not \in \linspan(U)$ is exactly (b) and (c). 
    
    It remains to show that (a) holds if and only if $C \land U \Rightarrow f$, assuming that (b), (c), and (d) hold. By (b), the space $\linspan(U, f + \eqfalse)$ is consistent. Then by ($*$), property (a) holds if and only if $U \land (f + \eqfalse) \Rightarrow \bigwedge \linneg{C}$, if and only if $(\lnot \bigwedge \linneg{C} \land U \Rightarrow f$, which simplifies to $C \land U \Rightarrow f$. 
\end{proof}

Finally, unit propagation on a \emph{set} of clauses is defined analogously to the classical case.

\begin{definition}
    \emph{Unit propagation} on a set of clauses $\Delta$ refers to choosing clauses $C \in \Delta$ and updating $U$ with unit propagation on $C$, until $\eqfalse \in U$ or no $C \in \Delta$  causes propagation. 
\end{definition}

We note that $U$ stays linearly independent and consistent while deducing equations $f \neq \eqfalse$ because we require $f,\, f+\eqfalse \not \in \linspan(U)$. When $\eqfalse$ is explicitly deduced and added, the set $U$ stays linearly independent but becomes inconsistent. Lastly, the important property of a unique fixed point is preserved in our setting, allowing clauses $C$ to be chosen in arbitrary order, proven similarly to the classical case. 

\begin{proposition}
Let $\Delta$ be a set of clauses and let $U$ be a set of equations. Let $U_*$ and $U_*'$ be two potential results from unit propagation on $\Delta$ starting with $U$. 
Then either $\eqfalse \in \linspan(U_*) \cap \linspan(U_*')$ or $\linspan(U_*) = \linspan(U_*')$.
\end{proposition}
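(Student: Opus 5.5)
The proof follows the classical confluence argument for unit propagation. The key tool is monotonicity: unit propagation on a clause depends only on $\linspan(U)$, by \cref{def:unitprop}, and the propagation conditions are monotone in this span. Concretely, suppose $C$ allows deducing $f$ from $\linspan(V)$, and $\linspan(V) \subseteq \linspan(W)$ with $W$ consistent. Then either $\linneg C \subseteq \linspan(W)$, so $C$ propagates $\eqfalse$ from $W$; or $f \in \linspan(W)$; or $f + \eqfalse \in \linspan(W)$; or $C$ still propagates some equation from $W$. In the last case, (2) of \cref{prop:up-wd} gives $\linspan(W, f') = \linspan(W, f)$ for the deduced $f'$. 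In every case $C$ does not leave $W$ unchanged unless $f \in \linspan(W)$ or $\eqfalse \in \linspan(W, \ldots)$. The cleanest way to state this is semantically, via \cref{prop:semantic}: $C$ deduces $f$ from $V$ means $C \land V \Rightarrow f$, and hence $C \land W \Rightarrow f$ whenever $W \Rightarrow V$.

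Let the propagation producing $U_*$ deduce $h_1, \dots, h_r$ from reason clauses $C_1, \dots, C_r$, in that order. Assume $\eqfalse \notin \linspan(U_*')$. We show by induction on $j$ that $h_j \in \linspan(U_*')$ for each $h_j \ne \eqfalse$, and that $h_j = \eqfalse$ cannot occur. Inductively, $\linspan(U, h_1, \dots, h_{j-1}) \subseteq \linspan(U_*')$, and $C_j \land U \land h_1 \land \dots \land h_{j-1} \Rightarrow h_j$. Therefore $C_j \land U_*' \Rightarrow h_j$.

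Since $U_*'$ is consistent and a fixed point, $C_j$ causes no propagation from $U_*'$. We now rule out the bad possibilities using \cref{prop:semantic}.
\begin{itemize}
\item If $C_j \land U_*' \Rightarrow \eqfalse$, then $C_j$ would deduce $\eqfalse$, contradicting that $U_*'$ is a fixed point. This case also covers $h_j = \eqfalse$.
\item If $h_j + \eqfalse \in \linspan(U_*')$, then $U_*' \Rightarrow h_j + \eqfalse$ while $C_j \land U_*' \Rightarrow h_j$. Together these give $C_j \land U_*' \Rightarrow \eqfalse$, which is the previous case.
\item If neither $h_j$ nor $h_j + \eqfalse$ lies in $\linspan(U_*')$, then $C_j$ deduces $h_j$ from $U_*'$, again contradicting the fixed point.
\end{itemize}
Hence $h_j \in \linspan(U_*')$. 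This gives $\linspan(U_*) \subseteq \linspan(U_*')$ and shows $\eqfalse \notin \linspan(U_*)$.

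This argument shows that if either result is consistent, the other is consistent as well. Running it symmetrically gives equality of the spans. The only remaining case is that both $U_*$ and $U_*'$ are inconsistent, and then $\eqfalse$ lies in both spans; this is the first alternative of the statement.

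The one delicate point is the fixed-point semantics. Termination of propagation on $U_*'$ means either $\eqfalse \in U_*'$ or no clause causes propagation. In the consistent case, "no propagation" must be read through \cref{def:unitprop} with respect to $\linspan(U_*')$. Here I rely on the fact that consistency of $U_*'$ is exactly the hypothesis \cref{prop:semantic} needs. The case where $\eqfalse \in \linspan(U_*')$ but it was never explicitly added cannot arise: each deduction keeps $U$ consistent until $\eqfalse$ is explicitly derived.
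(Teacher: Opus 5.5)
Your proof is correct, and it takes a different route from the paper's.

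\textbf{The paper's route.} The paper applies Newman's lemma to the rewriting relation on subspaces, with all inconsistent subspaces identified. It checks termination: each step raises the dimension of the span. It then checks local confluence by a case analysis on two competing one-step deductions from the same span:
\begin{itemize}
\item $\eqfalse$ versus $\eqfalse$;
\item $\eqfalse$ versus an equation;
\item two equations $f,g$, split into three subcases according to whether $g$, $g+\eqfalse$, or neither lies in $\linspan(U,f)$.
\end{itemize}

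\textbf{Your route.} You give a direct least-fixed-point argument. You use \cref{prop:semantic} to get monotonicity for free, and show that any consistent terminal state $U_*'$ absorbs every step of the other run. The key point you correctly verify is that each intermediate set $U\cup\{h_1,\dots,h_{j-1}\}$ lies in $\linspan(U_*')$. It is therefore consistent, which is exactly the hypothesis \cref{prop:semantic} needs.

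\textbf{What each approach buys.} Your approach avoids the rewriting-theory machinery and the pairwise case split, and it does not need termination as a separate ingredient. It also proves slightly more: every consistent, propagation-closed span containing $U$ contains $\linspan(U_*)$, and a contradiction reached by one run forces a contradiction in every run. In exchange, the paper's argument works purely from the span conditions of \cref{def:unitprop}, without passing through the semantic characterization.

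\textbf{Two small remarks.} Your opening discussion of monotonicity is vague, is superseded by the semantic formulation, and can be dropped. Your closing claim that $\eqfalse$ cannot enter $\linspan(U_*')$ without being explicitly added fails when the starting $U$ is already inconsistent. That case is trivial, though, and your argument never actually uses the claim.
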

\begin{proof}
    By Newman's lemma (see, e.g., Lemma 2.7.2 in \cite{DBLP:books/daglib/0092409}), it suffices to prove that the unit propagation rewriting relation (on subspaces of affine equations, with all inconsistent subspaces identified as equivalent) is terminating and locally confluent. Termination is clear because each rewrite increases the dimension by 1: $U$ is consistent until $\eqfalse$ is added, and when $f$ is added we know $f,\, f + \eqfalse \not \in \linspan(U)$. 

    Local confluence means that, for any two deductions from the same $U$, there are sequences of further deductions that lead to the exact same state (remembering that we have identified all inconsistent subspaces to be equivalent). If the two deductions are $\eqfalse$ from $C$ or $\eqfalse$ from $D$, they are already in the same state. If the deductions are $\eqfalse$ from $C$ or $g \neq 1$ from $D$ (or similarly vice versa), unit propagate on $C$ starting at $\linspan(U, g)$ to immediately get an inconsistent state. 

    Lastly, suppose that $f$ is deduced from $C$ and $g$ is deduced from $D$. There are three more cases. First, if $g \in \linspan(U, f)$ then, because $f, g \not \in \linspan(U)$, we also have $f \in \linspan(U, f)$. Thus $\linspan(U, f) = \linspan(U, g)$, so the two spaces are already the same. 
    
    Second, if $g + \eqfalse \in \linspan(U, f)$ then, because $f,\, g+ \eqfalse \not \in \linspan(U)$, we have $f \in \linspan(U, g + \eqfalse)$. Again, because $f \not \in \linspan(U)$, we conclude that $f + \eqfalse \in \linspan(U, g + \eqfalse)$. Choose to unit propagate with $C$ from $\linspan(U, g)$ and unit propagate with $D$ from $\linspan(U, f)$. Then $\linneg{C} \subseteq \linspan(U, f + \eqfalse) \subseteq \linspan(U, g)$ and $\linneg{D} \subseteq \linspan(U, g + \eqfalse) \subseteq \linspan(U, f)$, so both deduce the inconsistent state.

    Finally, if $g,\, g + \eqfalse \not \in \linspan(U, f)$, the same kind of argument as the previous two cases gives $f,\, f + \eqfalse \not \in \linspan(U, g)$. Choose to unit propagate with $C$ from $\linspan(U, g)$ and unit propagate with $D$ from $\linspan(U, f)$. Then $\linneg{C} \subseteq \linspan(U, f+\eqfalse) \subseteq \linspan(U, g, f+\eqfalse)$ and $\linneg{D} \subseteq \linspan(U, g + \eqfalse) \subseteq \linspan(U, f, g+\eqfalse)$, so unit propagation succeeds in deducing $f$ and $g$ respectively. Both spaces end up as $\linspan(U, f, g)$. 
\end{proof}

\subsection{Clause learning} \label{ssec:clause-learning}

In classical CDCL, clause learning is often defined using implication graphs, which draw directed edges $(x, y)$ between units if $x$ is part of the reason for deducing $y$ in unit propagation. However, the graph structure does not translate easily to $\CDCL(\oplus)$. For example, continuing \cref{ex:unitprop}, it is not clear which edges should point into $\eq{z = 1}$. If one creates edges from both $\eq{x = 0}$ and $\eq{y=0}$, attempting to capture that these two units were used to deduce $\eq{z = 1}$, the graph only encodes the information $\eq{x = 1} \lor \eq{y = 1} \lor \eq{z = 1}$, which is weaker than the original clause $C = \eq{x \oplus y \oplus z = 1} \lor \eq{z = 1}$ used to deduce $\eq{z = 1}$.

Instead, our definition of linear conflict clauses is motivated by semantics of classical conflict clauses established in \cref{sec:classic}. To set up the notation, let $U = f_1, \dots, f_m, f_{m+1}$ with $f_{m+1} = \eqfalse$ be a sequence of unit propagations and decisions ending in contradiction, and let $R = R_1, \dots, R_m, R_{m+1}$ with $R_{m+1} = R_{\bot}$ be the corresponding list of reason clauses, including decisions denoted $R_i = {*}$. 
Write $U_k = f_1, \dots, f_k$ for the first $k$ units and define
$\level(f_i)$ as the number of decisions in $R_1, \dots, R_i$. 

\begin{definition}
    A clause $C$ is a \emph{conflict clause} (with respect to $U$ and $R$ as above) if
    \begin{itemize}
        \item $\linneg C \subseteq \linspan(U_{m})$ and 
        \item unit propagation starting with units $\linneg C$ and clauses $R$ produces a contradiction.
    \end{itemize}
    The \emph{asserting level} of a conflict clause $C$ is the smallest level $k$ such that starting with the prefix of $U$ consisting all units of level at most $k$, unit propagation on $C$ results in some propagation. A conflict clause $C$ is \emph{asserting} if its asserting level is less than $\level(\eqfalse)$. 
\end{definition}

In classical CDCL, a 1-UIP asserting clause is found by taking the reason for contradiction $R_\bot$ and going backwards, successively resolving with other reason clauses on the propagated variables. 
In $\CDCL(\oplus)$, \cref{alg:assert} mimics the same algorithm using affine resolution instead, though we state it in terms of addition and change of basis to be more concrete.

\begin{algorithm}[t]
\DontPrintSemicolon
\caption{1-UIP--like clause learning}\label{alg:assert}
\BlankLine
\KwIn{Units $U = f_1, \dots, f_m, \eqfalse$ and corresponding reasons $R = R_1, \dots, R_m, R_\bot$}
\BlankLine
Initialize $C \gets R_\bot$ and $i \gets m$.\;
\tcp{Invariant: $\linneg {C}\subseteq \linspan(U_i)$ and $R_i$ is not a decision.}

\BlankLine
\While{$C$ is not asserting}{
    Decrement $i$ until $C$ uses $f_i$ (i.e.\ $\linneg C \subseteq \linspan(U_i)$ but $\linneg {C} \not \subseteq \linspan(U_{i-1})$).\;
    \BlankLine
    
    Isolate $f_i$ in $C$ by changing basis into $C' \lor f$ with $\linneg{C'} \subseteq \linspan(U_{i-1})$.\;
    \tcp{$f$ contains $f_i$, possible by line 3}
    \BlankLine

    Isolate $f_i$ in $R_i$ by changing basis into $R_i' \lor g$ with $\linneg{R_i'}\subseteq \linspan(U_{i-1})$.\; 
    \tcp{$g$ contains $f_i$, possible by definition of unit propagation}
    \BlankLine
    
    Apply addition to update $C \gets C' \lor R_i' \lor (f + g)$.\;
    \tcp{Now $\linneg {C}\subseteq \linspan(U_{i-1})$ because $f_i$'s in $f$ and $g$ cancel out}
}  
\Return{$C$}\;
\end{algorithm}

\begin{proposition} \label{prop:assert-correct}
    \cref{alg:assert} computes an asserting clause.
\end{proposition}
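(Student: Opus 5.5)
We argue by maintaining the loop invariant stated in \cref{alg:assert}, namely that at the start of each iteration $C$ is a conflict clause with $\linneg C \subseteq \linspan(U_i)$. We then show that the loop is well defined and terminates with an asserting clause.

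For the base case, $C = R_\bot$. Since $\eqfalse$ was deduced from $R_\bot$ starting from $U_m$, item~\ref{contradiction-step} of \cref{def:unitprop} gives $\linneg{R_\bot} \subseteq \linspan(U_m)$. Unit propagation from the units $\linneg{R_\bot}$ using $R$ immediately yields a contradiction on $R_\bot$ itself, so $C$ is a conflict clause.

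For the inductive step, suppose $C$ is not asserting. We decrement $i$ until $C$ uses $f_i$, which is possible because $\linneg C \not\subseteq \linspan(U_0)=\linspan(\emptyset)$: a non-tautological clause has a nonzero linear negation. We first claim that $R_i$ is not a decision. If it were, then every unit used by $C$ beyond level $\level(f_i)-1$ would be $f_i$ itself. Hence, starting from the units of level at most $\level(f_i)-1$, which span $\linspan(U_{i-1})$, the condition $\linneg C \subseteq \linspan(U_{i-1}, f_i)$ together with $\linneg C\not\subseteq\linspan(U_{i-1})$ means unit propagation on $C$ deduces $f_i+\eqfalse$ (using $f_i,f_i+\eqfalse\notin\linspan(U_{i-1})$). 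Thus $C$ would be asserting at level $<\level(\eqfalse)$, a contradiction. This "first decision reached implies asserting" step, including the need to also cover propagations that are not at a decision point, is the step I expect to require the most care.

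Since $R_i$ is a reason clause, \cref{def:unitprop} gives $\linneg{R_i}\subseteq\linspan(U_{i-1}, f_i+\eqfalse)$ with $\linneg{R_i}\not\subseteq \linspan(U_{i-1})$, so $R_i$ uses $f_i$. Likewise $\linneg C\subseteq \linspan(U_{i-1},f_i)$, so $C$ uses $f_i$ with opposite sign. Both isolations are therefore possible as described after \cref{def:uses}. By \cref{prop:aff-res} with $U=U_{i-1}$, the new clause has
\[
\linneg{C_{\mathrm{new}}}=\linspan(\linneg C,\linneg{R_i},\eqfalse)\cap\linspan(U_{i-1})\subseteq\linspan(U_{i-1}),
\]
which restores the invariant with $i-1$.

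For the conflict-clause property, start from the units $\linneg{C_{\mathrm{new}}}$. Then unit propagation on $R_i$ gives $\linneg{R_i'} \subseteq \linneg{C_{\mathrm{new}}}$, up to the component along $g$. More precisely, $R_i'$ and $C'$ are falsified, so propagation on $R_i$ deduces $g$ (or a contradiction). Adding $g+\eqfalse$ to $f+g+\eqfalse$ gives $f+\eqfalse$, so the span now contains $\linneg{C'}$ and $f+\eqfalse$, that is, $\linneg C$. By monotonicity of unit propagation (a larger starting span only propagates more), together with the unique fixed point proposition, the propagation that witnessed $C$ as a conflict clause continues to a contradiction.

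Termination holds because $i$ strictly decreases and the loop cannot pass below the first decision, by the decision argument above. When the loop exits, $C$ is asserting by the loop condition, and we have shown it is a conflict clause. Therefore \cref{alg:assert} returns an asserting clause.
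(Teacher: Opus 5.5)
Your proof is correct and follows essentially the paper's route. You get the invariant from the $\subseteq$ direction of \cref{prop:aff-res}, and the conflict-clause property by replaying propagation (one step per iteration plus monotonicity, where the paper cites the first direction of \cref{prop:auto-input}). Termination comes from the observation that a decision used by $C$ would make $C$ assert its negation one level lower.

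There are two harmless slips. First, having deduced $g$ from $R_i$, you should add $g$ (not $g+\eqfalse$) to $f+g+\eqfalse$ to obtain $f+\eqfalse$. Second, $\linneg C\neq\{\eqtrue\}$ does not follow from non-tautology, since the clause $\eqfalse$ is non-tautological with $\linneg{\eqfalse}=\{\eqtrue\}$. It follows instead because such a $C$ would already be asserting at level $0$.
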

\begin{proof}
    The key loop invariant that $\linneg {C} \subseteq \linspan(U_i)$ is justified in the pseudocode comments. In particular, it starts true when $i = m$ by definition of unit propagation deriving $\eqfalse$ from $C = R_\bot$ and $U_m$. In each iteration, the argument is identical to the $\subseteq$ direction of \cref{prop:aff-res}.
    
    To show that $C$ is always a conflict clause, we also need to show that unit propagation from $\linneg {C}$ and $R$ derives contradiction. Essentially, one can just replay the original unit propagation. In particular, apply the same proof as the first direction of \cref{prop:auto-input}.
    
    To show that the loop terminates and that the output is asserting, note that if $k$ is the index of the last decision and the loop reaches $k$, then ${C}$ must be asserting: we have $\linneg {C} \subseteq \linspan(U_k) = \linspan(U_{k-1}, f_k)$ where $U_{k-1}$ is the previous level. So ${C}$ can be used to propagate, in particular deducing $f_k + \eqfalse$. (Note that $f_k, \, f_k + \eqfalse \not \in U_{k-1}$ as a decision, and $\linneg{C} \not \subseteq U_{k-1}$ because the previous level ended without any more propagation.)
\end{proof}

Note also that although there are generally many ways to isolate an equation in a clause, the determinism of \cref{alg:assert} is given by \cref{prop:aff-res}. Below, we give an example of \cref{alg:assert} in action. See \cref{sec:ex} for a extended version of this example.

% C1 in the trail basis = (f_2 + f_3) \lor (f_1 + f_3)
% C2 in the trail basis = (f_1 + f_3 + f_4) \lor (f_3 + 1)
% C3 in the trail basis = (f_1 + f_4 + 1) \lor (f_3 + f_4 + 1)
% Note that f_1 is at the first decision level, and f_2, f_3, and f_4 are all at the second decision level.

% C3 is not asserting because unit propagation cannot derive anything from C3 and f_1. 

% The first step of conflict analysis isolates f_4 in C3, by performing a change of basis to write C3 = (f_1 + f_3 + 1) \lor (f_3 + f_4 + 1). (One may also isolate f_4 by writing C3 = (f_1 + f_4 + 1) \lor (f_1 + f_3 + 1), but Prop 3.9 guarantees that this choice doesn't matter.) 

% Unit f_4 was derived using clause C2, and f_4 is already isolated in C2. We add C2 = (f_1 + f_3 + f_4) \lor (f_3 + 1) with C3 = (f_1 + f_3 + 1) \lor (f_3 + f_4 + 1) to eliminate f_4, resulting in C = (f_3 + 1) \lor (f_1 + f_3 + 1) \lor (f_1 + 1). This clause is linearly dependent and simplifies to C = (f_3 + 1) \lor (f_1 + 1). Unit propagation can derive f_3 + 1 from f_1 and C, thus C is asserting and conflict analysis stops. We learn the clause C = [d⊕e = 1] \lor [b⊕c = 1].

\begin{example} \label{ex:assert}
    Suppose that we decide equations $f_1$ and $f_2$, then unit propagation produces $f_3$, $f_4$, and $\eqfalse$ using reason clauses $R_3 = (f_2 + f_3) \lor (f_1 + f_3)$, $R_4 = (f_1 + f_3 + f_4) \lor (f_3 + \eqfalse)$, and $R_\bot = (f_1 + f_4 + \eqfalse) \lor (f_3 + f_4 + \eqfalse)$. Note that we have already converted all reason clauses into the $U$ basis, which is necessary to efficiently isolate $f_i$'s during the algorithm.

    To find an asserting clause, start with $R_\bot$. This clause is not asserting because no unit propagation is possible from just $f_1$. In the main loop, we generate the following $\Res(\oplus)$ proof, which isolates and eliminates $f_4$. 
    Recall that change of basis for $R_\bot$ operates on
    $\linneg{R_\bot}$, where $\linspan(f_1+f_4,f_3+f_4) =\linspan(f_1+f_3,f_3+f_4)$. For the other side, observe that $R_4$ already has $f_4$ isolated, so no change of basis is needed.

    \begin{prooftree}
        \AxiomC{$(f_1 + f_4 + \eqfalse) \lor (f_3 + f_4 + \eqfalse) $}
        \RightLabel{\proofreason{change-basis}}
        \UnaryInfC{$(f_1 + f_3 + \eqfalse) \lor \highlight{(f_3 + f_4 + \eqfalse)}$}
        \AxiomC{$\highlight{(f_1 + f_3 + f_4)} \lor (f_3 + \eqfalse)$}
        \RightLabel{\proofreason{add}}
        \BinaryInfC{$(f_1 + f_3 + \eqfalse) \lor (f_3 + \eqfalse) \lor \highlight{(f_1 + \eqfalse)}$}
    \end{prooftree}

    The final clause is linearly dependent and simplifies to $(f_1 + \eqfalse) \lor (f_3 + \eqfalse)$, which is asserting because $f_3 + \eqfalse$ can be deduced by unit propagation from $f_1$. Note that $R_3$ was not used at all, indicating that \cref{alg:assert} can terminate before reaching the decision. 
    
\end{example}

\section{Connections to \texorpdfstring{$\Res(\oplus)$}{Res(⊕)}}

To recap, there are two important ways that classical CDCL relates to Resolution. 
First, CDCL with any asserting clause learning procedure (not just the 1-UIP algorithm) produces Resolution proofs of size at most the number of unit propagations~\cite{DBLP:journals/jair/BeameKS04}. 
More generally, this can be viewed as an exact equivalence between unit propagation and a fragment of Resolution known as input Resolution (or sometimes trivial Resolution)~\cite{DBLP:journals/jacm/Chang70, DBLP:journals/jair/BeameKS04}. 
Second is the converse direction, that any problem with a Resolution proof of unsatisfiability of length $M$ also admits a sequence of decisions and restarts for CDCL that returns unsatisfiable after learning at most $O(n^2 M)$ clauses~\cite{DBLP:journals/ai/PipatsrisawatD11, DBLP:journals/lmcs/BeyersdorffB23}. 
We generalize both of these to $\CDCL(\oplus)$ and $\Res(\oplus)$.  

For simplicity in this section, consider the ``lines'' of $\Res(\oplus)$ proofs to be consistent subspaces of equations (i.e.\ $\linneg C$ instead of $C$, so that a \emph{consistent} subspace corresponds to a \emph{non-tautological} clause). Then  change of basis is the identity operation, so the only inference rule is addition: from $\linspan(A, f)$ and $\linspan(B, g)$, deduce $\linspan(A, B, f + g + \eqfalse)$. 

\begin{definition} \label{def:input-res}
An \emph{input $\Res(\oplus)$ proof} of a clause $C$ from a set of clauses $\Delta$ is a sequence of subspaces $\linneg{C_0}, \linneg{C_1}, V_1, \linneg{C_2}, V_2, \linneg{C_3}, V_3, \dots, V_k$ with $V_k = \linneg C$ that is itself a valid $\Res(\oplus)$ proof, where each $C_i \in \Delta$ and each $V_i$ is derived from an appropriate addition of $V_{i-1}$ and $\linneg{C_i}$, considering $V_0 = \linneg{C_0}$. In other words,
\begin{prooftree}
    \AxiomC{$V_0 = \linneg{C_0}$}
    \AxiomC{$\linneg{C_1}$}
    \BinaryInfC{$V_1$}
    \AxiomC{$\linneg{C_2}$}
    \BinaryInfC{$\vdots$}
    \AxiomC{$\linneg{C_k}$}
    \BinaryInfC{$V_k = \linneg{C}$}
\end{prooftree}
The length of the proof is $k$.
\end{definition}
\begin{definition}
    A consistent subspace of equations $V$ is \emph{conflict-like with respect to a set of clauses $\Delta$} if unit propagation starting with units $V$ and clauses $\Delta$ produces a contradiction.
\end{definition}

\begin{theorem} \label{prop:auto-input}  \label{prop:up-rp}
    If a clause $C$ has an input $\Res(\oplus)$ proof from $\Delta$, then $\linneg C$ is conflict-like with respect to $\Delta$. Conversely, if $\linneg C$ is conflict-like, there is an input $\Res(\oplus)$ proof of $C'$ such that $\linneg{C'} \subseteq \linneg{C}$. The number of additions in the proof is equivalent to the number of unit propagations made.
\end{theorem}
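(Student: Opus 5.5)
The plan is to argue both directions by induction, working throughout with linear negations. The key correspondence is that one addition step in an input proof matches one unit propagation step, and the inductive hypothesis is applied to the tail of the proof or the trail.

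\emph{Forward direction.} Suppose $V_0=\linneg{C_0},\dots,V_k=\linneg C$ is an input $\Res(\oplus)$ proof. I would induct on $k$ and prove that unit propagation from units $V_k$ with clauses $\{C_0,\dots,C_k\}$ reaches a contradiction.

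In the base case $k=0$, we have $\linneg{C_0}\subseteq\linspan(V_0)$, so the contradiction step of \cref{def:unitprop} fires on $C_0$.

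For the step, write $V_{k-1}=\linspan(A,f)$ and $\linneg{C_k}=\linspan(B,g)$. The proof then gives $V_k=\linspan(A,B,f+g+\eqfalse)$. Start from $V_k$ and look at $C_k$. Since $B\subseteq V_k$, we get $\linneg{C_k}\subseteq\linspan(V_k,g)$, and there are three cases.
\begin{itemize}
\item If $\linneg{C_k}\subseteq V_k$, we obtain $\eqfalse$ immediately.
\item If $g+\eqfalse\in V_k$, then $V_k\supseteq\linneg{C_k}$, which is the previous case.
\item Otherwise unit propagation deduces $g+\eqfalse$ by \cref{def:unitprop}, with $f,f+\eqfalse$ playing the usual role. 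After adding it, $f=(f+g+\eqfalse)+(g+\eqfalse)$ lies in the span, so the current span contains $\linspan(A,f)=V_{k-1}$.
\end{itemize}
By induction, propagation from $V_{k-1}$ reaches a contradiction. I still need a monotonicity lemma: if $W\supseteq V$ is consistent and propagation from $V$ reaches $\eqfalse$, then so does propagation from $W$. This follows by replaying each step. A deduction of $h$ from $V'$ either is still deducible from $W'\supseteq V'$, or $h$ is already in $W'$, or $h+\eqfalse\in W'$, which immediately gives inconsistency. Confluence (the preceding proposition) then lets us pick any order. Counting steps, we get at most one propagation per addition plus the final conflict.

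\emph{Converse direction.} Suppose propagation from $\linneg C$ uses reasons $R_1,\dots,R_m$ to deduce $f_1,\dots,f_m$ and then $\eqfalse$ via $R_\bot$. I would run exactly \cref{alg:assert}, but without the stopping test, eliminating $f_m,\dots,f_1$ in turn. We start with $R_\bot$, whose negation lies in $\linspan(\linneg C,f_1,\dots,f_m)$. At step $i$, if the current clause uses $f_i$ with respect to $\linspan(\linneg C,f_1,\dots,f_{i-1})$, we isolate $f_i$ and add $R_i$, which by \cref{def:unitprop} has $f_i$ isolable. Otherwise we skip it. By the $\subseteq$ direction of \cref{prop:aff-res}, the invariant $\linneg{\text{current}}\subseteq\linspan(\linneg C,f_1,\dots,f_{i-1})$ is preserved. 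After $i=1$ the result $C'$ has $\linneg{C'}\subseteq\linneg C$. Consistency of $\linneg{C'}$ follows from this containment, since $\linneg C$ is consistent. Each addition is between the running clause and an element of $\Delta$, so this is an input proof with at most $m$ additions, matching the number of propagations.

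\emph{Main obstacle.} I expect the main difficulty to be the case analysis in the forward step. There one must check the non-membership conditions $g+\eqfalse,g\notin V_k$ needed to legitimately propagate, and argue cleanly that the degenerate cases collapse to an immediate contradiction. The monotonicity-of-propagation lemma also needs to be stated carefully. Exact equality of the counts requires interpreting skipped or degenerate steps appropriately, so I would state the count as a correspondence of at most one propagation per addition and vice versa.
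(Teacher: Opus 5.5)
Your approach is the paper's. In the forward direction you replay the input proof from $V_k$ back to $V_0$, turning each addition into at most one propagation. In the converse you run \cref{alg:assert} with a basis of $\linneg C$ as the decisions and no asserting-clause stopping test. The converse is correct as written. Your monotonicity lemma makes explicit something the paper's replay uses implicitly: the paper checks the propagation conditions against $V_i$, although the known span may be larger.

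The second case of your forward step, however, is false. If $g+\eqfalse\in V_k$, then $g\notin V_k$ because $V_k$ is consistent, so $\linneg{C_k}=\linspan(B,g)\not\subseteq V_k$. This is the opposite of what you claim: $C_k$ is then satisfied by $V_k$ and propagates nothing, so no contradiction comes from it. For instance, take $A=\{\eq{x=1}\}$, $f=\eq{y=0}$ and $C_k=\eq{x=1}$, so $B=\emptyset$ and $g=\eq{x=0}$. Then $V_k=\linspan(\eq{x=1},\eq{x\oplus y=1})=V_{k-1}$, which contains $g+\eqfalse=\eq{x=1}$ but not $\eq{x=0}$.

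The correct observation is that $f=(f+g+\eqfalse)+(g+\eqfalse)\in V_k$, so $V_{k-1}=\linspan(A,f)\subseteq V_k$ already. You then conclude directly from the induction hypothesis on $V_{k-1}$ plus your monotonicity lemma, with no propagation at this step. This is exactly the paper's first case (``if $V_{i-1}\subseteq V_i$, induction continues directly''). So the degenerate cases do not both ``collapse to an immediate contradiction'' as your last paragraph expects. One yields $\eqfalse$; the other yields an addition matched by zero propagations. That is also why only the ``at most'' form of the count holds.
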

\begin{proof}
    Starting with an input $\Res(\oplus)$ proof following the notation of \cref{def:input-res}, we inductively construct a run a unit propagation from $V_k = \linneg{C}$. Suppose that we know that $V_i$ is true. 
    Write $V_{i-1} = \linspan(A, f)$, $\linneg{C_i} = \linspan(B, g)$, and $V_i = \linspan(A, B, f + g + \eqfalse)$. 
    If $V_{i-1} \subseteq V_i$, induction continues directly. If $\linneg{C_i} \subseteq V_i$, we conclude $\eqfalse$ from unit propagation with $C_i$ and are done. Otherwise, unit propagation with $C_i$ from $V_i$ deduces $f$: we have (a) $\linneg{C_i} = \linspan(B, g) \subseteq \linspan(V_i, g) = \linspan(V_i, f + \eqfalse)$, (b) $f \not \in V_i$ because $V_{i-1} \not \subseteq V_i$, (c) $f + \eqfalse \not \in V_i$ because $\linneg{C_i} = \linspan(B, g) \not \subseteq \linspan(A, B, g + (f + \eqfalse)) = V_i$, and (d) $\linneg{C_i} \not \subseteq V_i$ as previously stated. Now $V_{i-1} = \linspan(A, f) \subseteq \linspan(V_i, f)$ is known, so induction continues. At the end of the induction, the space $\linneg{C_0}$ belongs to the known units, so that unit propagation with $C_0$ results in $\eqfalse$.

    Conversely, suppose that $\linneg{C}$ is conflict-like, so a run of unit propagation ending in $\eqfalse$ exists. To construct the input $\Res(\oplus)$ proof, consider a basis of $\linneg{C}$ as the decisions and run \cref{alg:assert} on this sequence of propagations, only stopping the while loop at decisions, not for asserting clauses. This is clearly an input $\Res(\oplus)$ proof. Denote by $C'$ the derived conflict clause. As proven in \cref{prop:assert-correct}, every iteration satisfies $\linneg{C'} \subseteq \linspan(U_i)$. Upon reaching the decisions, $U_i = \linneg{C}$, so $\linneg{C'} \subseteq \linneg{C}$ as desired.
\end{proof}

While the version of \cref{prop:up-rp} for input Resolution and classical CDCL was originally proved using implication graphs \cite{DBLP:journals/jair/BeameKS04}, one can alternatively prove it by constructing a run of unit propagation for one direction and running \cref{alg:assert} without stopping for asserting clauses in the other direction, which generalizes to our setting. Also, because every conflict clause is also conflict-like, we get the following corollary, establishing one direction of the relationship between $\CDCL(\oplus)$ and $\Res(\oplus)$.

\begin{corollary}\label{thm:cdclbound}
    Let $\Delta$ be an unsatisfiable set of clauses and consider $\CDCL(\oplus)$ with any asserting clause learning scheme. The execution of the algorithm on $\Delta$ may be converted into a $\Res(\oplus)$ proof of unsatisfiability with at most as many additions as the number of unit propagations made.
\end{corollary}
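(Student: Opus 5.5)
The plan is to read the $\CDCL(\oplus)$ execution as a sequence of learned clauses $L_1, L_2, \dots, L_T$. Each one is derived from the current clause database $\Delta \cup \{L_1,\dots,L_{j-1}\}$ by an input $\Res(\oplus)$ proof, which we get from \cref{prop:up-rp}. Concatenating these derivations gives a single (general, DAG-like) $\Res(\oplus)$ proof in which every learned clause is available as a hypothesis for later steps. The final step of the run, where $\eqfalse$ is reached with no decisions in $U$, is handled in the same way and yields the empty clause.

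\textbf{Learned clauses.} Fix a conflict at which the algorithm learns an asserting clause $L_j$. By definition of a conflict clause, unit propagation starting from units $\linneg{L_j}$ and the reason clauses $R$ reaches a contradiction, so $\linneg{L_j}$ is conflict-like with respect to $R \subseteq \Delta \cup \{L_1,\dots,L_{j-1}\}$. The converse direction of \cref{prop:up-rp} then supplies an input $\Res(\oplus)$ proof of some $L_j'$ with $\linneg{L_j'} \subseteq \linneg{L_j}$, which means $L_j'$ is at least as strong as $L_j$. The proof replays the recorded propagations, so its number of additions is at most the number of unit propagations that happened at this conflict.

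\textbf{Replacing $L_j$ by $L_j'$.} Later conflicts use $L_j$ as a reason clause, so we must check that using $L_j'$ in its place is harmless. I would argue monotonicity: if $\linneg{L'} \subseteq \linneg{L}$, then any propagation or contradiction that $L$ produces from $\linspan(U)$ is also produced, or strengthened into a contradiction, by $L'$. This follows from the semantic characterization in \cref{prop:semantic}, because $L' \Rightarrow L$. Proceeding inductively with the original $U$-trail as units, conflict-likeness is preserved when reasons are replaced by stronger clauses; alternatively, one can reapply the unit-propagation direction of \cref{prop:up-rp}. Another route avoids the issue altogether: whenever $L_j'$ is strictly stronger, we append one change-of-basis/weakening step, or simply note that refutation only requires stronger lines, as in the proof of \cref{thm:res-parity}.

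\textbf{Final refutation.} When the algorithm returns ``unsatisfiable'', $\eqfalse \in \linspan(U)$ with no decisions. Hence $\{0\}$, the linear negation of the empty clause (equivalently, take $V = \linspan(\emptyset)$), is conflict-like with respect to the current database. \cref{prop:up-rp} then gives an input proof of some $C'$ with $\linneg{C'} \subseteq \{0\}$, which is the empty clause $\eqfalse$. Summing over all conflicts, the total number of additions is bounded by the total number of unit propagations.

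\textbf{Main obstacle.} I expect the delicate part to be the accounting and the replacement argument: one must make sure that the propagations counted at each conflict are exactly those replayed, and that swapping $L_j$ for $L_j'$ never increases the count. I would handle this by having the replay follow the original propagation order and skip any step that has become redundant. Skipping steps only shortens the proof.
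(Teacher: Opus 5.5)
Your proposal is correct and takes essentially the paper's route. The paper's whole argument is that every asserting clause is a conflict clause, hence conflict-like with respect to the reason clauses, so the converse direction of \cref{prop:up-rp} gives an input $\Res(\oplus)$ derivation of a clause at least as strong; these derivations are chained, and the level-$0$ conflict supplies the refutation. Your monotonicity argument for substituting the stronger $L_j'$ for $L_j$ in later reason sets, your treatment of the final conflict, and your per-conflict count of additions fill in details the paper leaves implicit.
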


For the converse relationship, the original simulation of Resolution by CDCL in~\cite{DBLP:journals/ai/PipatsrisawatD11} used a concept called \emph{absorption}~\cite{DBLP:conf/aaai/PipatsrisawatD08a}. A Boolean clause $C$ is \emph{absorbed} by a set of clauses $\Delta$ if for all sets of starting units $U$, unit propagation on the set $\Delta \cup \{C\}$ deduces the same units as $\Delta$ alone. (In some treatments, a clause that is not absorbed is called ``1-empowered,'' though we won't use this phrase.) This definition translates easily to $\CDCL(\oplus)$ and its more powerful unit propagation algorithm, but it turns out to be the wrong definition to use.

\begin{example}
    Suppose that $\Delta = \{\eq{x = 1} \lor \eq{z = 0}, \eq{y = 1} \lor \eq{z = 1}\}$ and we are considering $C = \eq{x = 1} \lor \eq{y = 1}$. This same example is a valid input for both classical unit propagation and the unit propagation in $\CDCL(\oplus)$.  Classically, $C$ would be absorbed by $\Delta$ because the only ways to use $C$ are if $\eq{x = 0}$ or $\eq{y = 0}$ were part of the starting units $U$, and in both of these cases, the deduction using $C$ can also be deduced using the two clauses in $\Delta$. 
    
    However, in our version, suppose that $U = \{\eq{x\oplus y = 0}\}$. We see that $C$ is actually not absorbed, because unit propagation can deduce, for example, $\eq{y = 1}$ from $U$ and $C$, whereas nothing can be unit propagated from $U$ and $\Delta$. In other words, it is much harder to absorb clauses in our version, because unit propagation is more powerful. This prevents us from tracing the argument in~\cite{DBLP:journals/ai/PipatsrisawatD11} directly.
\end{example}

Instead of absorbing clauses, we absorb specific decompositions of clauses. Intuitively, a decomposition is a particular way that a clause can be used for unit propagation, and it is absorbed if that method can be done through $\Delta$ instead. Absorbing a clause is equivalent to absorbing every possible decomposition. While there are at most $n$ possible decompositions for Boolean clauses, there are exponentially many possible decompositions for linear clauses.   

\begin{definition}
    Let $V$ be a consistent subspace of equations and let $\Delta$ be a set of of clauses. Let $(V', f)$ be such that $V = \linspan(V', f)$. We call this a \emph{decomposition} of $V$. The decomposition is \emph{absorbed} by $\Delta$ if unit propagation with units $V'$ and clauses $\Delta$ either results in contradiction or deriving $f + \eqfalse$. 
\end{definition}

The addition rule of $\Res(\oplus)$ naturally specifies decompositions of its antecedents, and these decompositions used in a given $\Res(\oplus)$ proof are exactly what we absorb. Then, at a high level, the rest of the construction of decisions and restarts to absorb these decompositions closely follows closely the original argument in~\cite{DBLP:journals/ai/PipatsrisawatD11}, and in fact improves it by a multiplicative factor $n$ (the number of variables). 

To summarize the improvement, given a clause $C$ in a Resolution proof of unsatisfiability from $\Delta$, the authors in~\cite{DBLP:journals/ai/PipatsrisawatD11} force the solver to absorb an entire conflict-like clause $C$, equivalent to absorbing all $n$ decompositions in the Boolean case. Instead, we only absorb the decompositions that are used in applications of the addition rule, which on average is fewer than 2 decompositions per clause, a factor $n$ improvement. For completeness, we give the full argument below, following the same definition and lemma structure of \cite{DBLP:journals/ai/PipatsrisawatD11}, which starts by defining which decompositions we would like to absorb.

\begin{definition}
   Let $\Delta$ be a set of clauses and $V$ be a conflict-like space of equations. A decomposition $(V', f)$ of $V$ is \emph{asserting-like for $\Delta$} if $(V', f)$ is not absorbed by $\Delta$.
\end{definition}

\begin{proposition} \label{clm:findassertlike}
    Suppose that $\Delta$ is unsatisfiable and unit propagation alone cannot derive a contradiction. Then every $\Res(\oplus)$ refutation of $\Delta$ contains a deduction 
    \begin{prooftree}
        \AxiomC{$\linspan(A, f)$}
        \AxiomC{$\linspan(B, g)$}
        \BinaryInfC{$\linspan(A, B, f + g + \eqfalse$)} 
    \end{prooftree}
    such that at least one of $(A, f)$ and $(B, g)$ is asserting-like for $\Delta$. 
\end{proposition}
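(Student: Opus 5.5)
Suppose, for contradiction, that in some $\Res(\oplus)$ refutation of $\Delta$ every addition step
\[
\linspan(A,f),\ \linspan(B,g) \;\vdash\; \linspan(A,B,f+g+\eqfalse)
\]
has both decompositions $(A,f)$ and $(B,g)$ absorbed by $\Delta$. The plan is to show by induction along the refutation, in topological order, that every line $V$ of the proof is conflict-like with respect to $\Delta$. The final line is $\linspan(\eqfalse)$, viewed as the inconsistent subspace corresponding to the empty clause. So at the end unit propagation from the empty set of units (or from units already containing $\eqfalse$) would produce a contradiction. This contradicts the hypothesis that unit propagation alone cannot refute $\Delta$.

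\textbf{Base case and monotonicity.} A starting line $\linneg{C}$ with $C \in \Delta$ is conflict-like: started from units $\linneg C$, unit propagation on $C$ itself gives $\eqfalse$ by step~\ref{contradiction-step} of \cref{def:unitprop}. I will also use a monotonicity fact. If $W \subseteq W'$ and unit propagation from $W$ with clauses $\Delta$ reaches a contradiction (respectively, derives $h$), then so does unit propagation from $W'$. This follows from the semantic characterization \cref{prop:semantic} together with the unique fixed point proposition, since the closure of $W'$ contains the closure of $W$, or is inconsistent. A small point is that a unit deduced from $W$ may already lie in $\linspan(W')$; that case is harmless.

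\textbf{Inductive step.} Let $V=\linspan(A,B,f+g+\eqfalse)$ be derived from $\linspan(A,f)$ and $\linspan(B,g)$, and assume $V$ is consistent, since otherwise we are at the end of the proof. Start unit propagation from $V$. Because $A\subseteq V$ and $(A,f)$ is absorbed, by monotonicity propagation either reaches a contradiction or derives $f+\eqfalse$, so we may assume the closure contains $f+\eqfalse$. The closure then contains $(f+g+\eqfalse)+(f+\eqfalse)=g$, and hence all of $\linspan(B,g)$. By the induction hypothesis $\linspan(B,g)$ is conflict-like, so by monotonicity propagation from the current closure reaches $\eqfalse$. By confluence this means $V$ is conflict-like. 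Strictly, this step uses only one of the two absorptions, which is consistent with the statement ("at least one"); if the one used is not absorbed, that step already provides the desired deduction. So I will phrase the argument as: if $(A,f)$ is absorbed, the induction goes through. Hence a non-absorbed decomposition must occur somewhere, since otherwise the final contradiction follows.

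\textbf{Main obstacle.} The delicate step is the monotonicity/confluence claim when the larger starting space already contains units that the smaller one would have propagated, or becomes inconsistent midway. I would handle this with the semantic characterization \cref{prop:semantic}. Each propagation step from $W$ remains available from any consistent superspace $W''\supseteq W$, unless its conclusion is already in $\linspan(W'')$ or $W''$ together with the clause yields $\eqfalse$. In every case the resulting span stays a superset of the corresponding span from $W$, so the fixed point reached from $W'$ contains the fixed point reached from $W$.
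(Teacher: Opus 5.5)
Your inductive step is sound, but you have misidentified the last line of the refutation, and as written this breaks the final contradiction. A line is $\linneg{C}$, and the empty clause $\eqfalse$ has $\linneg{\eqfalse}=\linspan(\eqfalse+\eqfalse)=\{\eqtrue\}$. That is the zero subspace, i.e.\ the empty set of units. By contrast, $\linspan(\eqfalse)$ is the linear negation of the tautology $\eqtrue$ and is never a line. So every line, including the last, is consistent. Your sentence ``assume $V$ is consistent, since otherwise we are at the end of the proof'' therefore exempts exactly the step you need.

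Under your reading, knowing that an inconsistent final space is contradictory says nothing about propagation from no units, so no contradiction with the hypothesis follows. Your parenthetical ``or from units already containing $\eqfalse$'' is precisely this vacuous case. The repair is to apply your inductive step to the last line $V=\{\eqtrue\}$ itself, whose premises are necessarily $\linspan(f)$ and $\linspan(f+\eqfalse)$. This shows that unit propagation from no units reaches $\eqfalse$, contradicting the hypothesis.

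You should also say explicitly that at the first step where the induction fails, the premises are conflict-like by the induction hypothesis. Only then is the non-absorbed decomposition there genuinely asserting-like, since that definition requires the ambient space to be conflict-like.

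With that fixed, your proof is correct, and its key step differs from the paper's. The paper takes the first non-conflict-like line and uses \emph{both} absorptions. From $V$ one propagates $f+\eqfalse$ and $g+\eqfalse$, and these sum with $f+g+\eqfalse\in V$ to $\eqfalse$. That implication needs nothing about the premises beyond the two absorptions.

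You instead use only the absorption of $(A,f)$ to obtain $f+\eqfalse$, hence $g$ and all of $\linspan(B,g)$. You then invoke the conflict-likeness of the other premise, together with the monotonicity lemma; the paper uses that lemma tacitly, whereas you justify it. The payoff is a stronger statement. By symmetry, at the first non-conflict-like line neither decomposition can be absorbed, so \emph{both} $(A,f)$ and $(B,g)$ are asserting-like, not just one.
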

\begin{proof}
    Our refutation starts with linear negations of clauses and ends with $\eqtrue = \eq{0 = 0}$. Note that starting subspaces are conflict-like, and by hypothesis, $\eqtrue$ is not conflict-like. Consider the deduction of the first subspace in the proof that is not conflict-like, with notation as in the proposition statement. Then, $(A, f)$ and $(B, g)$ are conflict-like, and suppose for contradiction that that both are absorbed. This means that $A$ and $B$ either deduce contradiction, or deduce $f + \eqfalse$ and $g + \eqfalse$, respectively. All cases contradict $\linspan(A, B, f + g + \eqfalse)$ not being conflict-like. 
\end{proof}

Note that once every asserting decomposition in a proof is absorbed by $\Delta$, the contrapositive of \cref{clm:findassertlike} implies that unit propagation will deduce contradiction from $\Delta$. Thus, we turn our attention to absorbing asserting decompositions. We adopt the convention that when we specify a sequence of decisions, some decisions may already be contained in $\linspan(U)$, and in this case $\CDCL(\oplus)$ automatically skips them.

\begin{proposition}\label{clm:absorb}
    Let $\Delta$ be a set of clauses and let $(V', f)$ be an asserting-like decomposition for a consistent subspace of equations $V$. There exists a decision sequence that results in absorbing $(V', f)$ while learning at most $n^2$ clauses, where $n$ is the number of variables.
\end{proposition}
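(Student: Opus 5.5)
We follow the Pipatsrisawat--Darwiche argument. Fix a basis $v_1, \dots, v_r$ of $V'$ (so $r \le n$). The decision sequence is simply: after a restart, decide $v_1, \dots, v_r$ in order. By the convention above, any $v_j$ already in $\linspan(U)$ is skipped. We then analyze what happens and repeat until $(V', f)$ is absorbed. One \emph{round} is a restart followed by this decision sequence.

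In a round, one of three things happens.
\begin{enumerate}
\item All decisions are made without conflict. Then $\linspan(U) \supseteq V'$, and unit propagation from $\Delta$ has reached a fixed point. By the unique fixed point proposition, propagation from units $V'$ alone reaches the same span as propagation from $U$. We then check whether this span contains $f+\eqfalse$, or whether propagation hit a contradiction; if either holds, the decomposition is absorbed and we stop.
\item Some decision leads to a conflict. Then the solver learns an asserting clause $C$ and backjumps to its asserting level. Because $C$ is asserting, propagation at that level now derives a new unit. We then continue deciding the remaining $v_j$.
\item The final subcase of (1): all decisions are made, the state is consistent, and $f+\eqfalse \notin \linspan(U)$. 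This contradicts $V$ being conflict-like, and we must rule it out.
\end{enumerate}

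To rule out case (3), decide one more equation, $f$, which is legal when $f, f+\eqfalse \notin \linspan(U)$. If $f \in \linspan(U)$, then $\linspan(U) \supseteq V$. If instead we decide $f$, then $\linspan(U) \supseteq V$ as well. Either way, since $V$ is conflict-like and propagation is monotone in the starting span, propagation reaches a contradiction. That conflict falls under case (2), and we learn a clause.

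The key accounting claim is that every learned clause strictly grows the span that propagation reaches from the prefix of decisions $v_1, \dots, v_j$ at the asserting level. More precisely, the asserting clause propagates a unit not previously in the closure at that level. This closure is the propagation closure of $\linspan(v_1, \dots, v_j)$ under $\Delta$, with $j \le r$. These closures are monotone as $\Delta$ grows, since learned clauses are only added. Each closure has dimension at most $n$, so each of the at most $n$ prefix levels can be strictly enlarged at most $n$ times. This gives at most $n^2$ learned clauses. Once no level can grow and no conflict remains, we are in case (1) with $f+\eqfalse$ derived (or a contradiction), so $(V', f)$ is absorbed.

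The main obstacle is making the potential argument rigorous. After backjumping to level $k$ and asserting, we must show two things:
\begin{itemize}
\item the closure of $\linspan(v_1, \dots, v_k)$ under the new $\Delta$ strictly contains the old one;
\item no other level's closure decreases.
\end{itemize}
Non-decrease follows from monotonicity of propagation in $\Delta$ together with the fixed-point proposition. Strict increase uses the definition of asserting level. At level $k$ the prior trail was already a fixed point for the old $\Delta$, yet $C$ now propagates there, so the dimension goes up. A further subtlety is that skipped decisions ($v_j$ already implied) change the level indexing. We index the potential by decision prefix $j$ rather than by solver level, so the bound is $r \cdot n \le n^2$ regardless of skipping.
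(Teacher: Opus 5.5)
Your proof is correct, and your decision sequence is essentially the paper's: restart, a basis of $V'$, then $f$, repeated. The difference is in how you count learned clauses.

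The paper splits the count into two factors:
\begin{itemize}
\item \emph{Within one iteration}, each successive conflict lands at a strictly lower level, so at most $n$ clauses are learned before the next decision, which is a restart.
\item \emph{Across iterations}, only the first asserting clause of each iteration is charged to a one-dimensional potential, the dimension of the propagation closure of $V'$. This grows because that clause propagates a new unit from the freshly rebuilt trail of $V'$, so there are at most $n$ iterations.
\end{itemize}
You instead use one global potential $\sum_j \dim \mathrm{cl}_\Delta(\linspan(v_1,\dots,v_j))$ and charge \emph{every} learned clause to the prefix at its asserting level.

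What your route buys is a slightly better bound, roughly $n^2/2$ since the $j$-th term starts at dimension at least $j$. Your accounting also never needs the restarts between rounds.

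What it costs is an invariant that the paper sidesteps by restarting and charging only the first clause of each iteration. Every completed level $k$ of the trail must be a propagation fixed point of the \emph{current} $\Delta$, not merely of the $\Delta$ in force when that level was built. You assert this (``the prior trail was already a fixed point for the old $\Delta$'') without justification, and it should be stated. It holds for the following reason. Any clause learned after level $k$ was last completed has asserting level strictly greater than $k$; otherwise level $k$ would have been reopened or erased. By minimality of the asserting level, such a clause propagates nothing from the unchanged trail at level $\le k$.

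Two small slips:
\begin{itemize}
\item In case (3), nothing contradicts $V$ being conflict-like; this is simply the non-absorbed situation. Also $f\in\linspan(U)$ is impossible there, since a consistent fixed point containing $V$ would have to be inconsistent. So deciding $f$ is always legal.
\item The prefixes run over $j=0,\dots,r$, i.e.\ $r+1$ of them, not $r$. This is still at most $n$: an asserting-like decomposition must have $f\notin V'$, hence $r\le n-1$, so your $n^2$ bound survives.
\end{itemize}
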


\begin{proof}
    Until $(V', f)$ is absorbed, we repeatedly perform the following loop as our decision sequence for $\CDCL(\oplus)$: decide a restart, then any basis of $V'$, then $f$. We claim that we learn $n$ clauses per loop iteration, and the loop terminates within $n$ iterations.
    
    For the number of clauses per loop iteration, first note that we never deduce contradiction and perform conflict analysis before deciding $f$ because $(V', f)$ is not absorbed, and we always deduce contradiction after deciding $f$ because $V$ is conflict-like. After reaching a contradiction, $\CDCL(\oplus)$ can reach contradiction and learn clauses many more times before needing the next decision. Each contradiction moves the solver up by at least one level, so it takes at most $n$ learned clauses to reach the next decision. 

    For the number of loop iterations, consider the first asserting clause derived in each iteration. Each allows a new equation to be deduced from decisions $V'$. Since each is new with respect to all previous learned asserting clauses, they must be linearly independent, and hence this can occur at most $n$ times.
\end{proof}

\begin{theorem} \label{thm:full-simulation}
    Given a $\Res(\oplus)$ proof of unsatisfiability of a set of clauses $\Delta$ using $M$ additions, there is a sequence of decisions and restarts such that $\CDCL(\oplus)$ learns at most $2n^2 M$ clauses before concluding unsatisfiability. 
\end{theorem}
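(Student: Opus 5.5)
The plan is to combine \cref{clm:findassertlike} and \cref{clm:absorb} in an iterative absorption argument, following the structure of~\cite{DBLP:journals/ai/PipatsrisawatD11}. Fix the given refutation $\pi$ of $\Delta$, with $M$ addition steps. Each addition step in $\pi$ specifies exactly two decompositions, $(A,f)$ of $\linspan(A,f)$ and $(B,g)$ of $\linspan(B,g)$. Let $\mathcal{D}$ be the multiset of all these decompositions, so $|\mathcal{D}| \le 2M$.

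The main loop runs on the current clause database $\Delta_t$, which is $\Delta$ together with all clauses learned so far. At the start of each round, first check whether unit propagation from the empty set of units on $\Delta_t$ already derives $\eqfalse$. If it does, $\CDCL(\oplus)$ (after a restart) immediately reports ``unsatisfiable'' with no further learning, and we stop. Otherwise, $\Delta_t$ is still unsatisfiable, because learned clauses are sound consequences of $\Delta$. Also, $\pi$ is still a refutation of $\Delta_t$, since its hypotheses lie in $\Delta \subseteq \Delta_t$. So \cref{clm:findassertlike} gives an addition step in $\pi$ one of whose decompositions $(V',f) \in \mathcal{D}$ is asserting-like for $\Delta_t$. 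In particular, it is a decomposition of a conflict-like space $V$ that is not absorbed. Now apply \cref{clm:absorb} to $(V',f)$ with database $\Delta_t$. The resulting decision sequence makes $(V',f)$ absorbed while learning at most $n^2$ clauses.

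The key monotonicity fact to verify is that absorption is permanent: once $(V',f)$ is absorbed by $\Delta_t$, it is absorbed by every superset $\Delta_{t'} \supseteq \Delta_t$. This holds because adding clauses can only enlarge what unit propagation derives from $V'$. More precisely, by the unique-fixed-point proposition, the fixpoint span (or contradiction) for the larger database contains the one for the smaller database, since every propagation step remains available. We also need that clause deletion never occurs in the nondeterministic run, which we may assume. Hence each round absorbs a decomposition of $\mathcal{D}$ that was previously unabsorbed and stays absorbed afterwards, so there are at most $|\mathcal{D}| \le 2M$ rounds. The total number of learned clauses is therefore at most $2M \cdot n^2$.

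I expect the main obstacle to be a subtle point in invoking \cref{clm:findassertlike} repeatedly. Its proof picks the first non-conflict-like subspace, and its antecedents must be conflict-like with respect to the current $\Delta_t$. This needs conflict-likeness to also be monotone in the database, which follows from the same fixpoint-monotonicity argument. One should also check that \cref{clm:absorb}'s hypothesis (a consistent $V$ that is conflict-like) is met by the decomposition the proposition returns. It is, since the antecedent lines of $\pi$ are consistent subspaces. A final minor point is that the terminating round, where propagation alone yields $\eqfalse$, learns nothing, so it does not affect the bound.
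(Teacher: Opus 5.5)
Your proposal is correct and follows the paper's proof. You repeatedly apply \cref{clm:findassertlike} to the fixed refutation, whose hypotheses stay in the growing database, absorb the resulting unabsorbed decomposition via \cref{clm:absorb} at a cost of at most $n^2$ learned clauses, and use permanence of absorption to bound the number of rounds by the $2M$ decompositions used in additions; your explicit check that absorption and conflict-likeness are monotone in the clause database spells out what the paper asserts in one line.
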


\begin{proof}
    The decision sequence is constructed adaptively while running $\CDCL(\oplus)$. Until unit propagation can immediately derive contradiction from $\Delta$, we repeatedly perform the following loop: use \cref{clm:findassertlike} to find an asserting-like decomposition $(V', f)$, then use \cref{clm:absorb} to absorb $(V', f)$ by learning $n^2$ new clauses.
    
    Note that $\Delta$ is growing throughout this process, but \cref{clm:findassertlike} may view the proof of unsatisfiability as always starting with the original $\Delta$ and ignoring learned clauses. Since each absorbed decomposition must stay absorbed as $\Delta$ grows, and each addition uses 2 decompositions, the procedure must terminate within $2M$ iterations. 
\end{proof}

Since each clause has a length $O(n)$ input $\Res(\oplus)$ proof by \cref{prop:up-rp}, the translation of a $\Res(\oplus)$ proof to a $\CDCL(\oplus)$ execution and back to $\Res(\oplus)$ incurs a multiplicative $O(n^3)$ overhead. When this argument can be translated line-by-line back to the Resolution and classical CDCL case, this $O(n^3)$ factor matches what is proven in~\cite{DBLP:journals/lmcs/BeyersdorffB23}.

\begin{corollary} \label{cor:full-sim-resolution}
    \cref{thm:full-simulation} is also true for Resolution and classical CDCL. 
\end{corollary}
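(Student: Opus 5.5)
The plan is to specialize the whole chain of arguments to the Boolean case, where every equation is a literal. Resolution proofs are then exactly $\Res(\oplus)$ proofs whose lines are Boolean clauses, and classical CDCL is $\CDCL(\oplus)$ restricted to literal decisions. Concretely, I would view a Resolution refutation of a CNF $\Delta$ with $M$ resolution steps as a sequence of subspaces $\linneg C$. Each resolution of $C\lor x$ with $D\lor\lnot x$ is an addition with $A=\linneg C$, $f=\eq{x=0}$, $B=\linneg D$, $g=\eq{x=1}$, so that $f+g+\eqfalse=\eqtrue$. The decompositions $(A,f)$ and $(B,g)$ are then of the Boolean form (negated literals of the clause minus one literal, plus that literal). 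Weakening steps can be dropped, since they only make lines weaker, exactly as in the proof of \cref{thm:res-parity}.

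Next I would re-prove \cref{clm:findassertlike} with classical unit propagation in place of linear unit propagation. The argument is verbatim: take the first derived clause that is not conflict-like for classical unit propagation, and observe that if both of its decompositions were absorbed (classically: from the literals of $A$, unit propagation yields $\lnot f$ or a conflict), then the resolvent would be conflict-like. The only fact needed is that classical unit propagation is monotone in its starting literals, which holds.

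Then I would re-prove \cref{clm:absorb} for classical CDCL. The decision sequence is: restart, decide the literals of $V'$ (all literals, so the decisions remain Boolean), then decide $f$. Conflict is guaranteed after $f$ because $V$ is conflict-like. Each conflict moves the solver up at least one level, so at most $n$ learned clauses arise per iteration. Each iteration's first asserting clause lets classical unit propagation from $V'$ derive a new literal. This is the step I expect to be the main obstacle, since it needs the 1-UIP (or any asserting) clause for classical CDCL to satisfy the same asserting property under classical propagation. Here I would invoke the standard fact that classical asserting clauses assert a new literal at the asserting level, and that these literals are on distinct variables, giving at most $n$ iterations. Finally, the counting argument of \cref{thm:full-simulation} ($2M$ decompositions, each absorbed with $n^2$ learned clauses, absorption preserved as $\Delta$ grows) carries over unchanged, giving the $2n^2M$ bound.
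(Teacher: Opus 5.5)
Your proposal is correct and takes the same route as the paper, which justifies the corollary only by remarking that the argument of \cref{clm:findassertlike}, \cref{clm:absorb}, and \cref{thm:full-simulation} translates line by line to the Boolean case. In that translation a resolution step is an addition with $f+g+\eqfalse=\eqtrue$, and the $n$-iteration bound comes from the set of variables assigned by unit propagation from $V'$ strictly growing each round. The one detail you leave implicit is that no conflict can occur while the literals of $V'$ are being decided, since otherwise $(V',f)$ would already be absorbed; this guarantees that each round's first conflict happens at the level of $f$, so its asserting clause propagates a literal on a variable outside the closure of $V'$.
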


\section{Implementation and Heuristics} \label{sec:heur} \label{ssec:impl} \label{ssec:decision}

Our proof-of-concept implementation of $\CDCL(\oplus)$ is called \solverName{}. In this section, we detail techniques and heuristics that \solverName{} implements to compute efficiently, and note a few cases of common classical CDCL heuristics that are difficult to translate. We focus on techniques and heuristics that are enabled by default in \solverName{}, and provide brief explanations of the other tested heuristics. We selected these defaults based on experimentation over the families of XNF and CNF formulas described in \cref{sec:exp}. A summary of these experiments can be seen in \cref{fig:heur}.

\begin{figure}
    \centering
    \includegraphics[width=\linewidth]{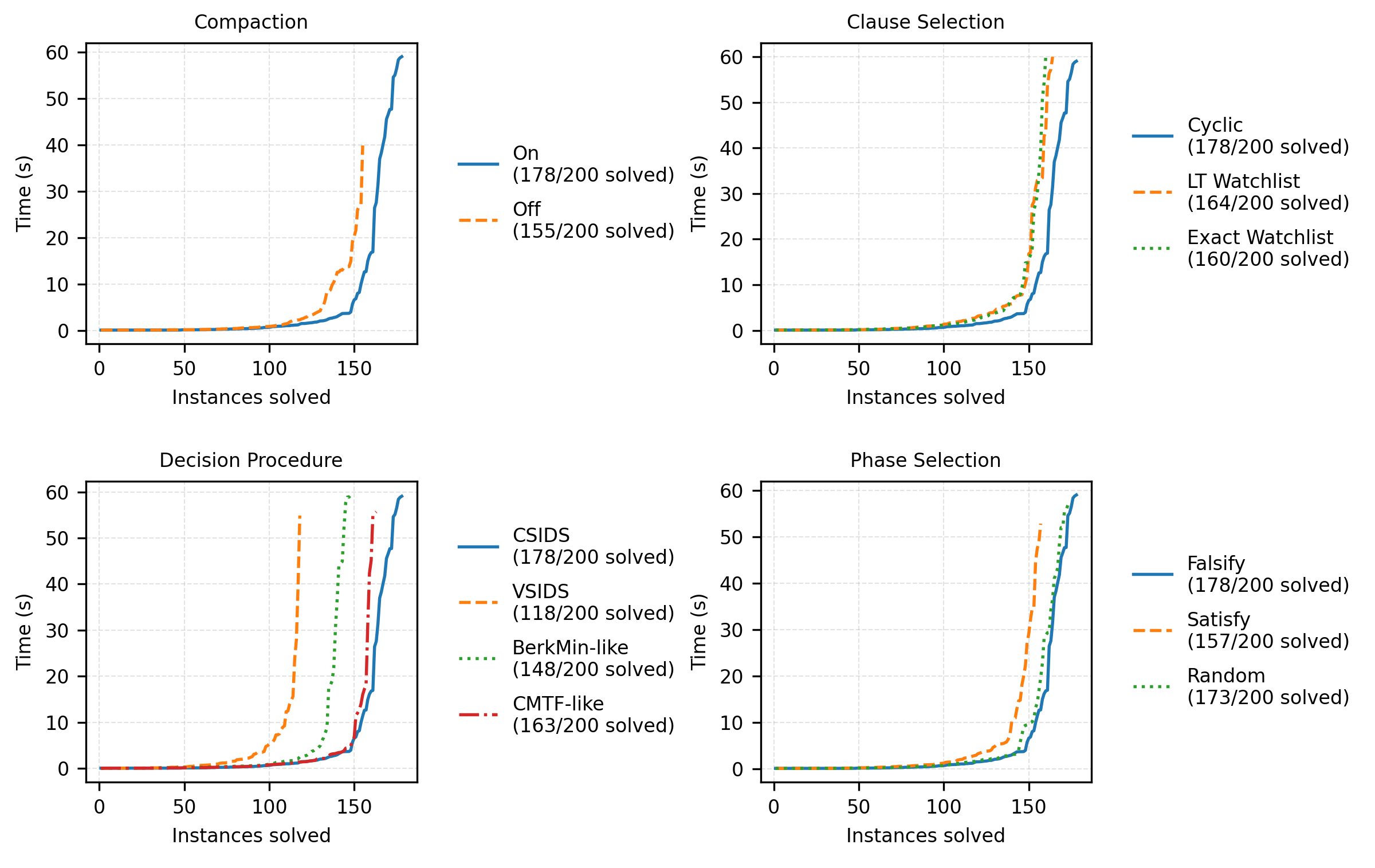}
    \caption{Cactus plot comparison of heuristics on a random sample of 40 instances from each benchmark family. When one heuristic is varied, the other three follow: Compaction on, Cyclic clause selection, CSIDS decisions, and Falsify phase selection. }
    \label{fig:heur}
\end{figure}

\subparagraph*{Fast linear algebra over $\mathbb{F}_2$} We store equations with $n$ variables as $\lceil(n+1)/64 \rceil$-length row vectors of bit-packed unsigned integers, using the \texttt{bit} library~\cite{manual:Fitzmaurice26} with additional customizations. This representation makes it easy to add equations or reduce modulo a subspace in row-echelon form. The use of bit-packing for XOR constraints is similar to \cite{DBLP:conf/cav/SoosGM20}.

For clause learning as in \cref{alg:assert}, note that when $U$ is kept in row-echelon form (not reduced row-echelon form), we can preserve the sequence of subspaces $U_1, \dots, U_m$ by tracking the order in which rows are inserted into $U$. This allows us to change the basis of every clause into the $U_\text{reverse} = \eqfalse, f_m, \dots, f_1$ basis. Row-echelon form in the new basis guarantees that the latest propagated equation is already isolated.

\subparagraph*{Compaction} When the trail at decision level 0 is non-empty, we remove all satisfied clauses and reduce equations in all remaining clauses using this knowledge, removing the columns that are now zero. We call this \emph{compaction}. 

While generally a useful optimization, we note that proofs of unsatisfiability are much longer when compaction is on. This is because when compaction is on, learned clauses may be ``missing'' level 0 units, which need to be added back for proof logging. See \cref{sec:drat} for details on proof logging.

\subparagraph*{Watched equations and clause selection} Watched literals~\cite{DBLP:conf/dac/MoskewiczMZZM01} optimize classical
unit propagation in two ways:
(1) if neither of the two ``watched'' literals in a clause is falsified then it doesn't propagate, and
(2) the ``watchlists'' of clauses that watch a particular literal allow the solver to quickly find candidate clauses for
propagation. 

$\CDCL(\oplus)$ can easily watch two equations $f, g \in \linneg C$, with a note that we must also check if $f + g \in \linspan(U)$ to determine if a watch must be updated. However, watchlists are more difficult to adapt. A new unit alone does not determine which clauses require action---it also depends on the previously known units, since linear combinations are allowed. Thus, we need heuristics for clause selection.

In experimental testing, it was fastest to simply \textbf{cycle through all clauses} in the order they appear in the input or are learned. We attempted two alternative watchlist-inspired designs, but the overhead of maintaining these structures exceeded the performance benefits. The comparison can be found in \cref{fig:heur}.

One attempt was a \textbf{leading-term watchlist} that maintains a list of watched equations with leading variable $x$ for every variable $x$. When deducing a new equation with leading variable $x$, we can now reduce the equations in the corresponding list, but we must also process all lists for lexicographically earlier leading terms because they may now reduce to 0.

Alternatively, we could try to guarantee that selected clauses will be useful to look at (i.e.\ produce unit propagation or a change in watched equation), similar to classical watchlists. Our \textbf{exact watchlist} maintains a copy of every watched equation (including the sum) at every decision level, each reduced by the known units at that level and organized into doubly-linked lists by leading term. When an equation is learned, only the equations the list with the same leading term need to be reclassified.

\subparagraph*{Decision procedure} Unless $\CDCL(\oplus)$ branches on equations, it cannot improve over classical CDCL for CNF inputs, so we need to choose equations rather than variables.  Classical SAT solvers commonly choose variables based on information tracked for each one, using techniques like VSIDS~\cite{DBLP:conf/dac/MoskewiczMZZM01, DBLP:conf/sat/EenS03} and CHB~\cite{DBLP:conf/aaai/LiangGPC16}; since we have exponentially many equations, these are difficult to adapt.   

Instead, we choose an equation to branch on by first choosing a clause $C$ and then choosing an equation associated with $C$, similar to some older approaches in classical SAT solving such as BerkMin~\cite{DBLP:conf/date/GoldbergN02} and Clause-Move-To-Front (CMTF)~\cite{DBLP:conf/hvc/GershmanS05}.
We cannot simply choose one of the equations syntactically represented in $C$, since that would still only branch on literals for CNF inputs.  Thus our default approach is to branch on a random unfalsified equation in $\linneg C$ obtained by rejection sampling, justified by the following.

\begin{proposition} \label{prop:sample}
Let $U$ be a set of units and let $C$ be a clause not satisfied by $U$ such that there exists $g, h \in \linneg{C}$ with $g, h, g+h \not \in \linspan(U)$ (i.e.\ watched equations). For a uniformly random equation $f$ sampled from the space $\linneg C$, there is a $\ge 75\%$ chance that $f,\, f+\eqfalse \not \in \linspan(U)$. 
\end{proposition}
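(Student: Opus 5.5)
The plan is to pass to the quotient space and count. Let $W = \linneg C$, a consistent subspace since $C$ is non-tautological, and let $L = \linspan(U)$; assume $U$ is consistent, as it is whenever we branch. An equation $f \in W$ is bad exactly when $f \in L$ or $f + \eqfalse \in L$, that is, when $f \in L' := \linspan(L, \eqfalse)$. So the bad set is $W \cap L'$, a subspace of $W$. It therefore suffices to show that $W \cap L'$ has codimension at least $2$ in $W$; then the bad fraction is at most $1/4$, and a uniform $f \in W$ is good with probability at least $3/4$.

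To get the codimension bound, consider the linear map $\pi \colon W \to \linspan(W, L')/L'$. Its kernel is exactly $W \cap L'$, so it is enough to exhibit two elements of $W$ whose images in the quotient are linearly independent. The natural candidates are the watched equations $g, h$. We need $g, h, g+h \notin L'$, and then the images of $g$, $h$, $g+h$ are the three nonzero vectors of a $2$-dimensional image.

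The main obstacle is that the hypothesis is stated with $\linspan(U)$, not with $L'$. We know $g, h, g+h \notin L$, but we must also rule out $g + \eqfalse$, $h + \eqfalse$ and $g + h + \eqfalse$ lying in $L$. I would handle this by interpreting ``watched'' as in the implementation remark: the watched equations are not falsified, meaning their negations are not already implied. I would adopt as the precise hypothesis that none of $g, h, g+h$ lies in $L'$, and note that this is what the watch invariant guarantees, since a watch is updated whenever one of these becomes determined either way. If the statement must be read literally with only $\linspan(U)$, I would use the assumption that $C$ is not satisfied by $U$ and argue case by case. Suppose, say, $g + \eqfalse \in L$. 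Then $U$ implies that the corresponding equation $g + \eqfalse$ of $C$ holds, so $U$ satisfies $C$, contradicting the hypothesis. The same argument works for $h$ and for $g+h$, since $g + h \in W$ as well.

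With $g, h, g+h \notin L'$ established, the images $\pi(g)$, $\pi(h)$, $\pi(g+h) = \pi(g) + \pi(h)$ are all nonzero. Hence $\pi(g)$ and $\pi(h)$ are distinct nonzero vectors over $\mathbb{F}_2$, so they are linearly independent and $\dim \pi(W) \ge 2$. Therefore $|W \cap L'| = |W|/|\pi(W)| \le |W|/4$, which proves the $\ge 75\%$ bound. This also justifies rejection sampling, since the expected number of samples is at most $4/3$.
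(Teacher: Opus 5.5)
Your proof is correct, and it takes a somewhat different and more careful route than the paper's short argument.

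\textbf{The two routes.} The paper fixes a basis of $\linneg C$ containing $g$ and $h$. It notes that a uniformly random combination has a nonzero coefficient on $g$ or $h$ with probability $3/4$, and asserts that in that case $f\notin\linspan(U)$. You instead identify the bad set as the subspace $\linneg C\cap\linspan(U,\eqfalse)$. You then show it has codimension at least $2$ in $\linneg C$, because $g$ and $h$ remain independent modulo it.

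\textbf{What your version buys.} First, the paper's implication holds only for a well-chosen basis. It works for a basis of $\linneg C$ containing $g$, $h$ \emph{and} a basis of the bad subspace, and such a basis exists precisely because of your codimension fact. It fails for an arbitrary extension of $\{g,h\}$. Take $U=\{\eq{x=0}\}$, $g=\eq{y=0}$, $h=\eq{z=0}$, with third basis vector $\eq{x\oplus y=0}$. Then $g+\eq{x\oplus y=0}=\eq{x=0}\in\linspan(U)$, even though its $g$-coefficient is $1$. Second, the paper leaves $f+\eqfalse\notin\linspan(U)$ implicit, whereas you derive it from the non-satisfaction hypothesis.

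\textbf{What the paper's version buys.} It is brief and matches how the solver actually samples, namely random coefficients over a stored basis. That is harmless, since the resulting distribution is uniform on $\linneg C$ whatever basis is used.

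\textbf{Two tidy-ups.} Your hedged ``watch invariant'' strengthening of the hypothesis is unnecessary. The literal-reading argument already works once you state two things explicitly. One is that ``not satisfied'' means $U\not\Rightarrow C$. The other is that $g+\eqfalse\Rightarrow C$ because $\linspan(g)\subseteq\linneg C$; note that $g+\eqfalse$ need not be a syntactic disjunct of $C$. Also, you need not assume $U$ is consistent, since that follows from $g\notin\linspan(U)$.
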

\begin{proof}
    A random equation of $\linneg{C}$ can be sampled by taking a random linear combination of a basis of $\linneg{C}$ that includes $g$ and $h$. There is a 75\% chance that at least one of the coefficients of $g$ and $h$ is 1, in which case $f \not \in \linspan(U)$. 
\end{proof}

It remains to choose the clause $C$.
% It would be easy to adapt previous clause-based variable-selection heuristics, which first select a clause (before selecting a variable within the clause). %While recently less favored, these older techniques generalize more easily to equations. 
We introduce a new heuristic that we call CSIDS (Clause State Independent Decaying Sum), which is inspired by standard VSIDS and interpolates between the intuition behind clause choices in BerkMin and CMTF, generalizing both. 

In particular, \textbf{CSIDS} maintains activity scores for all clauses and selects clauses in order of activity. Starting clauses have base activity 0, new asserting clauses have base activity $A_0 \ge 0$, each clause used during conflict analysis has activity bumped by $\Delta \ge 0$, and all activities decay by a multiplicative factor $0 \le \alpha \le 1$ at every conflict. Our default parameter choices are $A_0 = 1$, $\Delta = 1$, and $\alpha = 0.98$. 

We note that choosing $A_0 = 1$, $\Delta = 0$, and any $0 < \alpha < 1$ results in a \textbf{BerkMin-like} order, always producing clauses in the reverse order that they were learned. (Unlike the original BerkMin heuristic, this falls back to starting clauses if all conflict clauses are satisfied.) We also note that choosing $A_0 = 1$, $\Delta = 1$, and any $0 < \alpha < 0.5$ results in a \textbf{CMTF-like} order, guaranteeing that all clauses used in conflict analysis, including the new asserting clause, are moved to the front. (Unlike the original CMTF heuristic, this also moves starting clauses and does not bound the number of clauses moved.)

We also included a baseline approach that mimics standard \textbf{variable-based VSIDS} and makes only variable decisions. Experimentally, our default parameter choices outperformed all three alternatives. This is shown in \cref{fig:heur}.

%Note that originally, BerkMin and CMTF also included heuristics for selecting a variable within each clause. Like VSIDS, these tracked information for each variable, which we cannot do. Also, if we chose an equation syntactically represented in the clause, we still would only branch on literals for CNF inputs. Thus our default approach is to branch on a random unfalsified equation in $\linneg C$ obtained by rejection sampling, justified by the following.

\subparagraph*{Phase selection} Common techniques such as phase saving~\cite{DBLP:conf/sat/PipatsrisawatD07} are also difficult to generalize due to tracking information for each variable. After sampling an equation $f \in \linneg C$ via the above, we tested the three simple techniques: always \textbf{satisfy} the clause (choosing $f + \eqfalse$), always \textbf{falsify} the clause (choosing $f$ and making progress towards unit propagation), and \textbf{randomly choosing}. Experimentally, always falsifying performed best.

\section{Experimental Results} \label{sec:exp}

In this section, we compare the performance of \solverName{} with various heuristics and against other existing solvers. All experiments are run on an Apple M2 processor with 16 GB RAM and a 60 second timeout. Our benchmark families are listed below. Note that because XNF solving is still relatively underdeveloped, we were only able to find one industrial XNF family. Pebbling and Tseitin instances described below are generated using \texttt{cnfgen}~\cite{DBLP:conf/sat/LauriaENV17}.

\begin{description}
    \item[Ascon-128~\cite{DBLP:journals/mics/AndraschkoDK24}] Set of 400 satisfiable instances relating to the Ascon-128 cipher. Note that these are all 2-XNFs, specifically designed for the 2-XNF solver 2-Xornado.
    \item[Random $k$-XNFs] Generated with $k \in \{2, 3, 4, 5\}$ and $n \in \{11, 12, \dots, 20\}$ variables, for 40 instances total. The number of clauses is set so that about 50\% of the instances are satisfiable. In practice, our 3-XNF instance with 15 variables has 81 linear clauses converts into a CNF with 1568 variables and 6293 clauses. Equations were not guaranteed to be linearly independent or consistent.
    \item[Restricted random $k$-XNFs] Inspired by a technique in approximate model counting~\cite{DBLP:conf/cp/ChakrabortyMV13}, we restrict a random $k$-XNF by a system of random linear equations. For \solverName{}, the linear equations should not increase the difficulty of solving. Generated with $k \in \{2, 3, 4, 5\}$ and $N \in \{22, 24, \dots, 40\}$ variables, for 40 instances total. Each instance generates as many clauses as needed to make an $(N/2)$-variable $k$-XNF instance satisfiable around 50\% of the time, plus $N/2$ random affine equations. 
    \item[Pebbling formulas lifted by $k$-XORs] Pebbling formulas are unsatisfiable and can be solved using unit propagation alone. We replace each variable with the XOR of $k$ fresh variables. \solverName{} continues to solve the lifted version using just unit propagation. Generated with $k \in \{2, 4, 6, 8\}$ on pyramid graphs of height $h \in \{60, 70, \dots, 150\}$, for 40 instances total. 
    \item[Tseitin formulas in CNF] An unsatisfiable family consisting of parity expressions naively translated to CNF. They are provably difficult for Resolution-based solvers~\cite{DBLP:journals/jacm/Urquhart87}. Five instances on random $k$-regular $n$-vertex were generated for each $(k, n) \in \{(k, 20) \mid k = 3, 4, \dots, 10\} \cup \{(4, 2^\ell) \mid \ell = 0, 1, \dots, 7\}$, for 75 instances total. These have $nk/2$ variables and $n2^k/2$ clauses.
\end{description}

\paragraph*{Comparison with existing solvers}

We benchmark \solverName{} 0.1 against Kissat 4.0.4, Kissat 4.0.4 with \texttt{-{}-plain} (disabling advanced techniques), CryptoMiniSAT 5.13.0, and 2-Xornado 1.0.2,\footnotemark{} all using default configurations except for Kissat \texttt{-{}-plain}. We convert XNF files to CNF-XOR for CryptoMiniSAT with an extension variable for each equation, then to CNF for Kissat with standard Tseitin encodings. For 2-Xornado, all formulas are converted into 2-XNF using extension variables, following the algorithm in~\cite{DBLP:journals/mics/AndraschkoDK24}. Note that runtime errors occurred with 2-Xornado on some pebbling formulas and Tseitin formulas; these were recorded as timeouts.

Randomly sampling 40 benchmarks from each family, \solverName{}'s performance significantly exceeds every other solver that we tested, shown in \cref{fig:main-cactus}. In addition, \solverName{} achieves this using significantly fewer decisions, indicating that \solverName{} makes better quality decisions and deductions on these XNF families compared to other solvers.

\begin{figure}
    \centering
    \includegraphics[width=\linewidth]{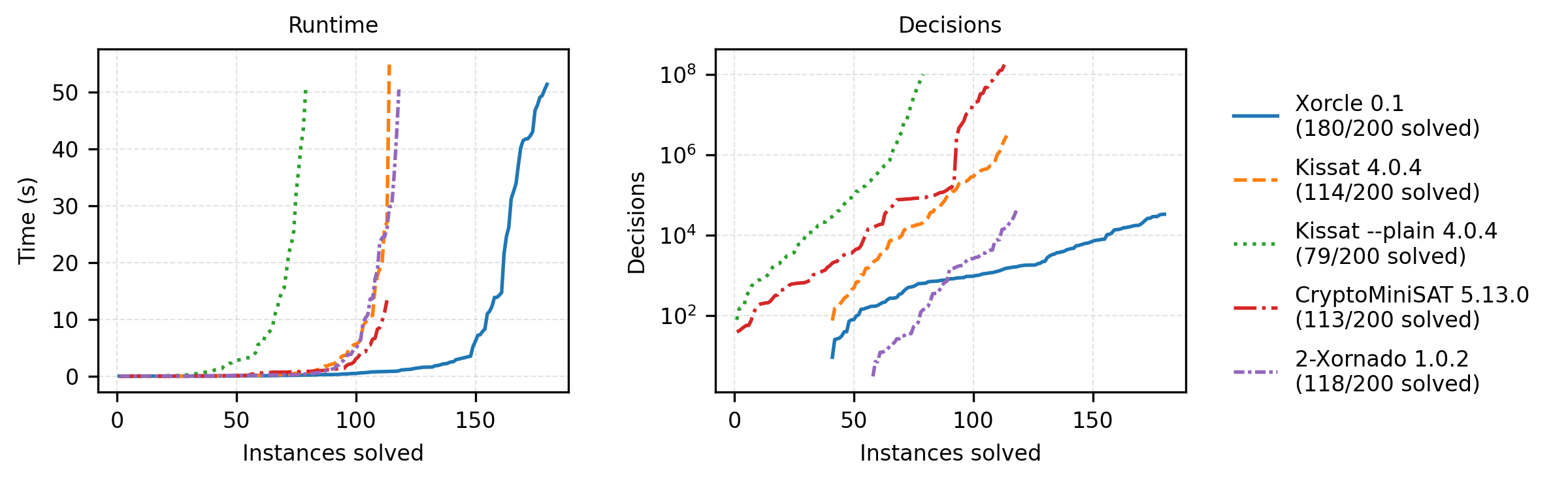}
    \caption{Cactus plot of 40 random instances from each family. Each solver is using default configurations except for Kissat \texttt{-{}-plain}. Note that \solverName{}, default Kissat, and 2-Xornado can solve some instances with 0 decisions (by unit propagation or preprocessing), so the corresponding lines on the decisions plot do not start from 0 instances solved.}
    \label{fig:main-cactus}
\end{figure}

We make special note of \solverName{}'s performance on two families. First, as shown in \cref{fig:ascon}, \solverName{} performs exceptionally well on 2-XNF Ascon-128 instances, even beating the default configuration of the dedicated 2-XNF solver 2-Xornado. This is noteworthy because \solverName{} is \emph{not} optimized for the 2-XNF input format, using general XNF reasoning.

Second, without any preprocessing and using only core reasoning, \solverName{} empirically solves CNF Tseitin formulas on expander graphs in our test set with nearly polynomially scaling running time. This is shown in \cref{fig:tseitin}, and is noteworthy because Resolution-based solvers must take exponential time. Note that CryptoMiniSAT can trivialize Tseitin formulas through preprocessing, but by default can only recognize parities with up to 7 variables.

\begin{figure}
    \centering
    \includegraphics[width=\linewidth]{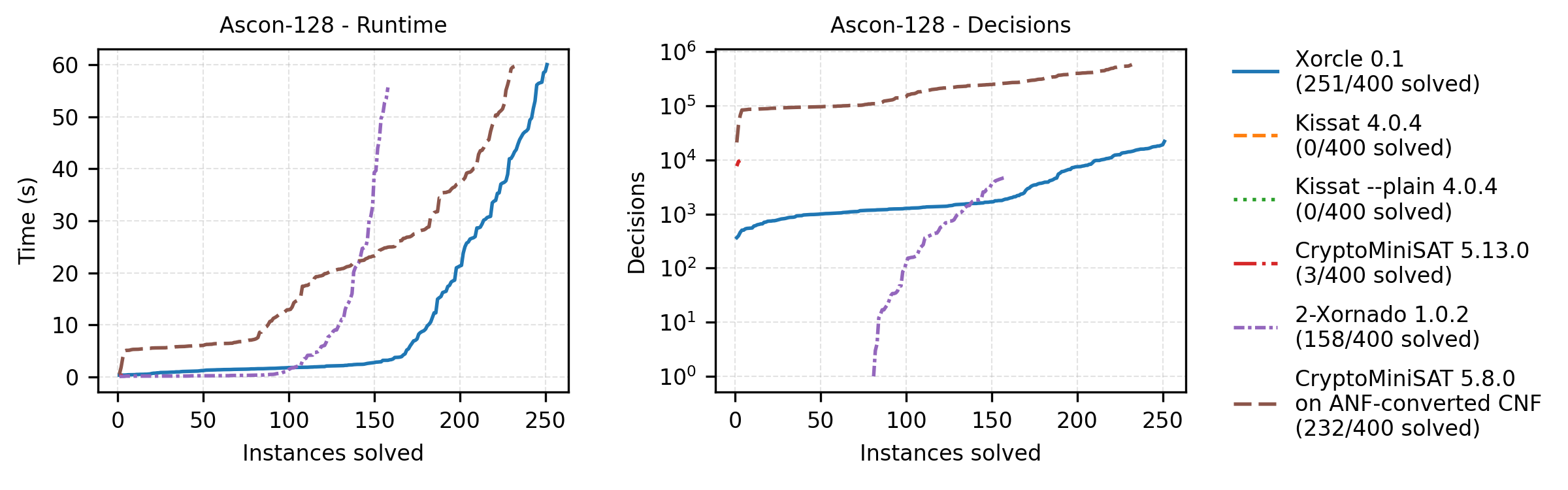}
    \caption{Cactus plot for all 400 instances from the Ascon-128 family of benchmarks. We include an extra curve to better compare with the experiment in \cite{DBLP:journals/mics/AndraschkoDK24}. In particular,  \cite{DBLP:journals/mics/AndraschkoDK24} measures the performance of CryptoMiniSAT using v5.8.0 and running on CNF converted directly from ANF files (whereas we normally convert from XNF for consistency with other tests). Both changes are needed for reasonable performance; in particular, the default configuration of v5.13.0 surprisingly performs significantly worse than v5.8.0 on this family.}
    \label{fig:ascon}
\end{figure}

\begin{figure} 
    \centering
    \includegraphics[width=\linewidth]{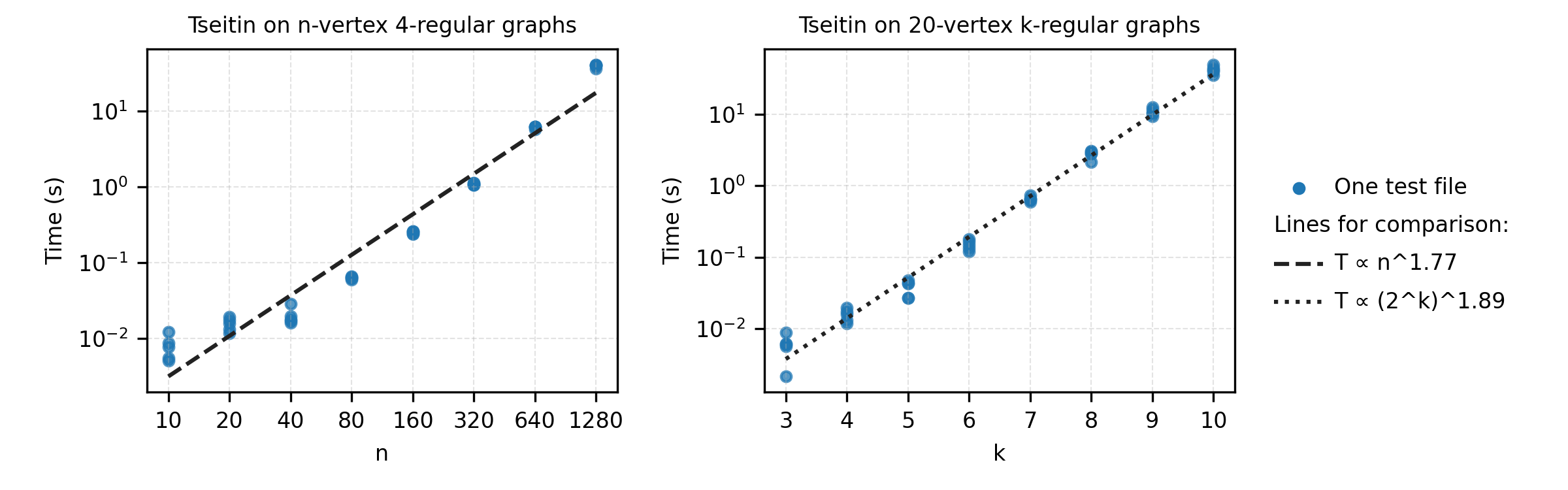}
    \caption{\solverName{}'s performance on Tseitin formulas for random $k$-regular $n$-vertex graphs, with 5 graphs per size, encoded in CNF. Over the tested range of inputs, scaling in $k$ appears polynomial in the input size, while scaling in $n$ appears to slightly exceed polynomial time.}
    \label{fig:tseitin}
\end{figure}

For completeness, cactus plots for performance on the complete families of random $k$-XNFs, restricted random $k$-XNFs, and pebbling formulas can be found in \cref{fig:assorted}.
Though it is not at all the focus of \solverName{}, \cref{fig:assorted} also includes a plot for SAT Competition '25 instances. Note that due to memory constraints on the benchmarking hardware, SAT Competition instances were filtered to those under 1 million clauses, totaling 281 instances out of 400.

\begin{figure}
    \centering
    \includegraphics[width=\linewidth]{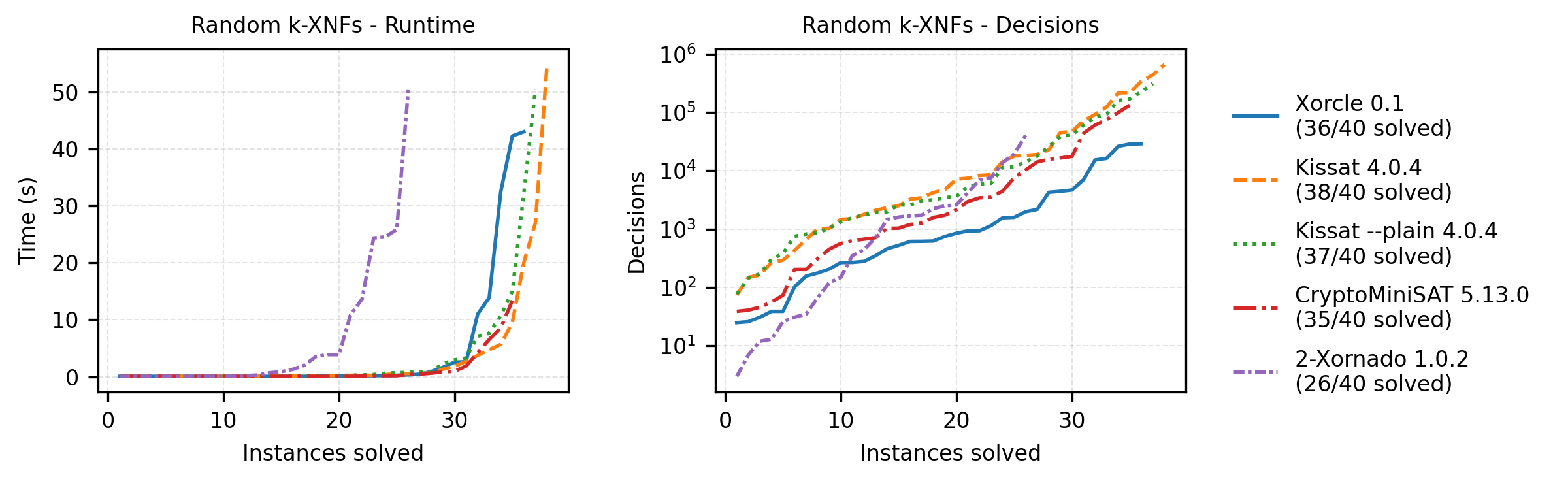}
    \includegraphics[width=\linewidth]{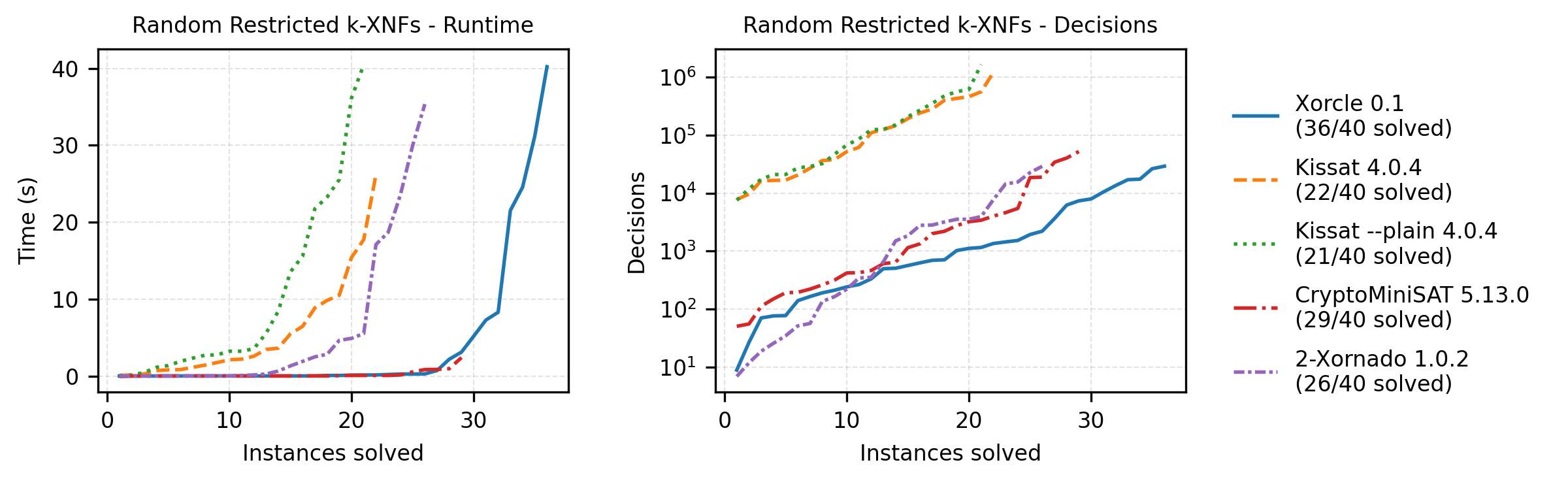}
    \includegraphics[width=\linewidth]{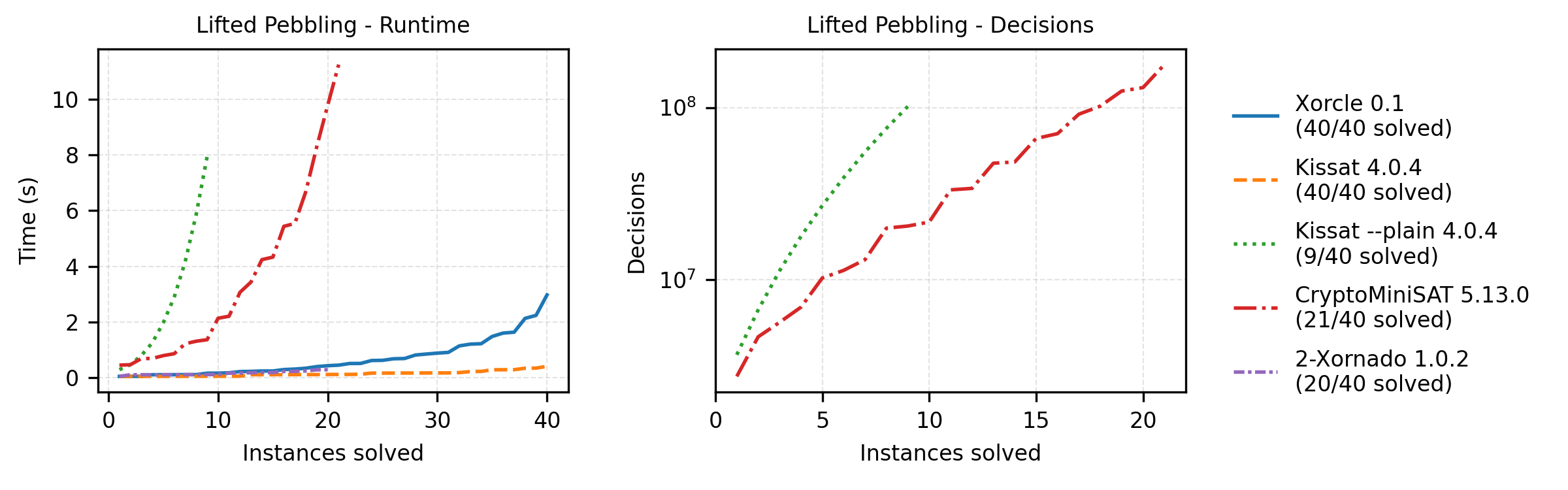}
    \includegraphics[width=\linewidth]{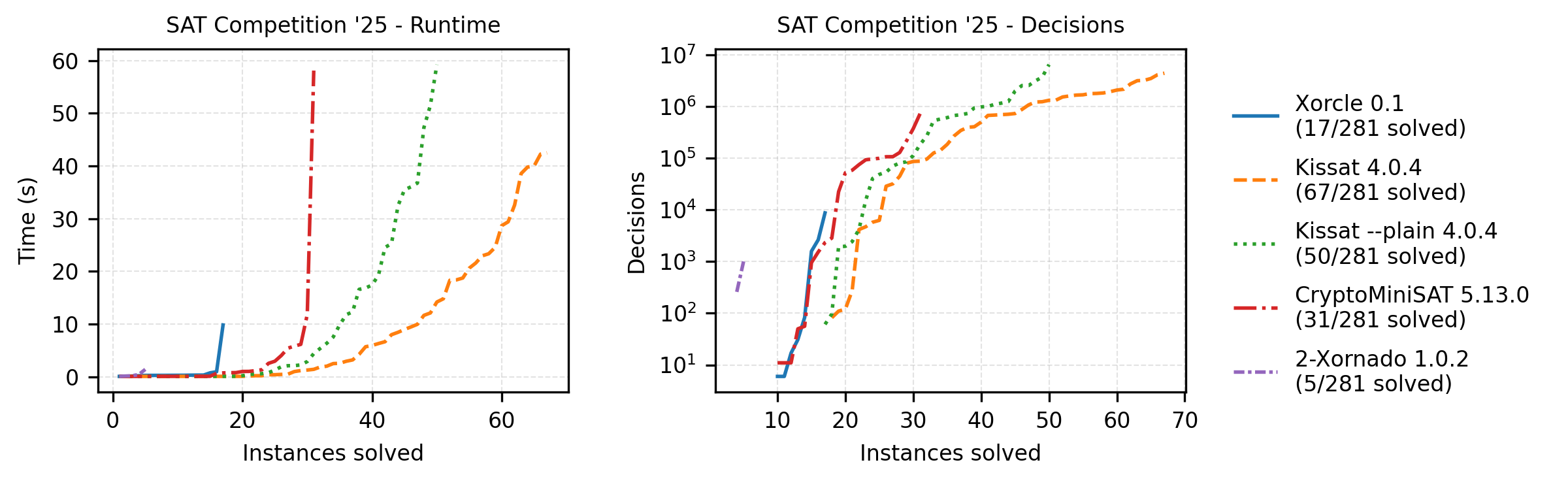}
    \caption{Cactus plots of performance broken down by benchmark family, additionally including SAT Competition '25. Note that some solvers can solve some formulas with just preprocessing and/or unit propagation, thus some lines in decision plots may be missing or may not start at 0.}
    \label{fig:assorted}
\end{figure}

\section{Proof Logging with LRUP(\texorpdfstring{$\oplus$}{⊕})} \label{sec:drat}

Many SAT solvers output machine-checkable proofs for unsatisfiable instances in DRAT format~\cite{DBLP:conf/sat/WetzlerHH14}, but DRAT does not natively support parity reasoning. CryptoMiniSAT outputs proofs in the XLRUP format~\cite{DBLP:conf/cav/TanYSMM24}. This supports parity reasoning but not extension variables, which are needed to compile our reasoning about linear clauses down to CNF-XOR. Another approach uses pseudo-Boolean solving and VeriPB~\cite{DBLP:conf/aaai/GochtN21}, but is more technically involved.

\solverName{} outputs a new variation of the LRUP proof format~\cite{DBLP:conf/tacas/Cruz-FilipeMS17} that we call $\LRUP(\oplus)$, identical to LRUP except with parity expressions written in place of literals, with similar syntax to the XNF extension to the DIMACS input format in~\cite{DBLP:journals/mics/AndraschkoDK24}. $\LRUP(\oplus)$ proofs can be generated by simply recording a trace of the clauses used and derived during conflict analysis, e.g.\ \cref{alg:assert}. We provide a basic independently-written $\LRUP(\oplus)$ proof checker that simply runs unit propagation to verify the proof. 

% \footnotetext{If compaction is enabled, because compacted clauses have all equations already reduced by the known level 0 units, we must also write these level 0 units as antecedents in the learned clauses. When fast proof verification is desired, we recommend turning compaction off in \solverName{}.}

\section{Conclusion}

We have demonstrated the potential and effectiveness of $\CDCL(\oplus)$, through both theoretical $p$-simulation of the $\Res(\oplus)$ proof system and empirical analysis of our proof-of-concept \solverName{} on several benchmarks. Still, there are limitations to our present understanding.

\solverName{} does not yet fully implement a variety of other existing CDCL heuristics, such as restarts~\cite{DBLP:conf/istcs/LubySZ93, DBLP:conf/aaai/GomesSK98}, clause deletion~\cite{DBLP:conf/ijcai/AudemardS09}, clause minimization~\cite{DBLP:conf/sat/SorenssonB09}, preprocessing and inprocessing techniques (see~\cite{DBLP:series/faia/BiereJK21} for a survey), and changing heuristics dynamically based on a schedule or other observations~\cite{DBLP:conf/sat/Biere08, DBLP:conf/sat/Oh15, DBLP:conf/cp/CherifHT21}.  These remain important open directions that could lead to performance improvements on both CNF and XNF formulas. 

Additionally, it would be interesting to explore heuristics that specifically target CNF formulas. For example, one might imagine swapping between Clause VSIDS and Variable VSIDS if the instance is suspected to require less parity reasoning, or using a different strategy to select an equation from a given clause in order to solve CNF problems like the bijective pigeonhole principle that require more complicated parity branching. 
The goal of this is not to substitute for general-purpose CDCL SAT solvers but, 
instead, to allow \solverName{} to be useful for a wider range of CNF formulas where there is an implicit need for parity reasoning.

On the theoretical side, while \cref{alg:assert} is clearly algorithmically similar to the standard 1-UIP algorithm, it is less clear if the output satisfies some notion of being specifically ``1-UIP'' (beyond being generally asserting), relating to classical notions of unique implication points and the 1-UIP cut. \cref{sec:classic} provides a potential semantic characterization of 1-UIP clauses (slightly generalizing the standard implication-graph definition), leading the following conjecture.

\begin{conjecture}
    Define the \emph{asserting index} of an asserting clause $C$ is the largest $i$ for which $C$ uses $f_i$ (that is, $\linneg C \subseteq \linspan(U_i)$ but $\linneg C \not \subseteq \linspan(U_{i-1})$). Say that a \emph{1-UIP clause} is an asserting clause whose asserting index is as large as possible. Then \cref{alg:assert} outputs a 1-UIP clause.
\end{conjecture}

Clause minimization is another classical CDCL notion typically defined with implication graphs~\cite{DBLP:conf/sat/SorenssonB09}. We would also be interested to semantically characterize this and faithfully generalize it to $\CDCL(\oplus)$. 

\newpage
\bibliography{bibliography}

\newpage

\appendix

\section{Semantic Characterizations of Classical CDCL} \label{sec:classic}

In this section, we describe unit propagation, conflict clauses, asserting clauses, and 1-UIP clauses in ways that uses semantic implication. These characterizations are likely folklore, but we have not found them explicitly collected in a single reference, so we show them here for the purpose of analogy.
While not necessary for any mathematical arguments in the main text, these equivalences show why our definitions of $\CDCL(\oplus)$ components are natural.

We use the following standard definitions of classical CDCL concepts~\cite{DBLP:conf/iccad/ZhangMMM01}. 
Unit propagation on a clause $C$ using a set $U$ of units learns the last literal of $C$ when all others are falsified by $U$. 
A run of unit propagation can be visualized as an implication graph, where vertices are the true units (both starting and deduced) and an edge $(x, y)$ means that $\lnot x$ is in the reason clause used to derive $y$. The following is a standard alternative characterization using semantic properties.

\begin{proposition}\label{prop:sem-up}
    Let $U$ be a set of units. Unit propagation on a clause $C$ deduces:
    \begin{itemize}
        \item the contradiction $\bot$ if and only if $C \land U \Rightarrow \bot$.
        \item a literal $\ell$ if and only if $C \land U \Rightarrow \ell$, $C \land U \not \Rightarrow \bot$, and $\ell, \lnot \ell \not \in U$. \qedhere
    \end{itemize}
\end{proposition}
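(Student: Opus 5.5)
The plan is to prove both bullets directly from the classical definition: $C$ propagates $\ell$ when every other literal of $C$ is falsified by $U$, and derives $\bot$ when every literal is falsified. Throughout we treat $U$ as a consistent set of literals, the classical analogue of the standing assumption in \cref{prop:semantic}. The key fact, analogous to fact ($*$) there, is this: for a consistent set of literals $U$, we have $U \Rightarrow m$ for a literal $m$ iff $m \in U$. This holds because the assignment that sets the literals of $U$ true and everything else arbitrarily can be chosen to falsify any literal outside $U$. Writing $C = m_1 \lor \dots \lor m_k$, we have $\lnot C \equiv \bigwedge_i \lnot m_i$. The whole argument is a literal-level translation of the proof of \cref{prop:semantic}, with $\linspan(U)$ replaced by $U$ itself.

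For the first bullet, unit propagation deduces $\bot$ iff every $\lnot m_i \in U$. By the key fact this holds iff $U \Rightarrow \lnot C$. That is equivalent to $C \land U$ having no satisfying assignment, i.e.\ $C \land U \Rightarrow \bot$.

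For the second bullet, the forward direction is simple. If $\ell \in C$ is propagated, then all other $\lnot m_i$ lie in $U$, so $C \land U \Rightarrow \ell$. Also $\lnot\ell \notin U$, since otherwise this would be the conflict case. The definition of propagation also requires $\ell \notin U$, which we take as given since new units are only deduced when not already known. Finally, $C \land U \not\Rightarrow \bot$ follows from the first bullet.

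For the converse, assume $C \land U \Rightarrow \ell$, $C \land U \not\Rightarrow \bot$, and $\ell, \lnot\ell \notin U$. By the first bullet, some literal $m_j$ of $C$ is not falsified by $U$. We show that $m_j = \ell$ and that it is the unique such literal. Suppose some literal $m \neq \ell$ of $C$ has $\lnot m \notin U$. We build an assignment satisfying $U$, making $m$ true and $\ell$ false, which contradicts $C \land U \Rightarrow \ell$. This needs $m \neq \lnot \ell$; if $m = \lnot\ell$, then simply setting $\ell$ false already satisfies $C$ via $m$. The construction is consistent because $\ell, \lnot\ell \notin U$ and $\lnot m \notin U$; the case where $m$ and $\ell$ share a variable is exactly the $m = \lnot\ell$ case just handled. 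Hence every literal of $C$ other than $\ell$ is falsified by $U$. Since some literal is not falsified, $\ell$ must occur in $C$ and be unfalsified, so unit propagation deduces $\ell$.

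The main obstacle is this last step, namely checking that the witness assignment can be built. This requires handling $m = \lnot \ell$ and variable overlaps carefully, and it relies on consistency of $U$. I would write that case analysis out explicitly and keep everything else brief.
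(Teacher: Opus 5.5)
Your proof is correct. The paper states this proposition as standard and gives no proof, and your argument is the literal-level translation of its proof of \cref{prop:semantic}: consistency of $U$ plays the role of fact ($*$), and that assumption is genuinely needed, since for $U=\{x,\lnot x\}$ and $C=y$ the first bullet fails. The converse of the second bullet could be shortened as in the paper's linear proof, by applying the key fact to the consistent set $U\cup\{\lnot\ell\}$, which gives $C\land U\Rightarrow\ell$ iff every literal of $C$ other than $\ell$ is falsified by $U$ and removes your case analysis (the case $m=\lnot\ell$ was never an obstruction anyway; only $m=\ell$ would be).
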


A \emph{conflict cut} is a partition of vertices in the implication graph into one part with all decisions and one part with the conflict $\bot$. Given a cut, the associated conflict clause is $\bigvee \{\lnot x \mid (x, y) \text{ across the cut}\}$.

\begin{proposition} \label{prop:sem-cc}
    Let $U$ be a sequence of units ending in $\bot$ and let $R$ be the corresponding reasons. Let $C$ be a clause such that $\lnot C \subseteq U$. The following are equivalent:
    \begin{enumerate}
        \item Unit propagation starting with units $\lnot C$ and clauses $R$ produces a contradiction.
        \item Every path from a decision to $\bot$ passes through at least one vertex in $\lnot C$.
        \item $C$ is a weakening of a conflict clause (i.e.\ a conflict clause with possibly extra literals). \qedhere
    \end{enumerate}
\end{proposition}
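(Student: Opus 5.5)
I would prove the three conditions equivalent in a cycle, (1)$\Rightarrow$(2)$\Rightarrow$(3)$\Rightarrow$(1), working directly with the implication graph $G$ of the given run (vertices $U$, decisions as sources, reason edges into each propagated literal). One small point first: I read ``$\lnot C \subseteq U$'' as saying that every literal of $\lnot C$ is a vertex of $G$, so the units in $\lnot C$ can be identified with vertices and paths can be said to pass through them.

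\textbf{(1)$\Rightarrow$(2).} I argue the contrapositive. Suppose some path $P$ from a decision $d$ to $\bot$ avoids $\lnot C$. Let $S$ be the set of vertices from which $\bot$ is reachable without touching $\lnot C$, including $\bot$; it contains every vertex of $P$. Define an assignment extending $\lnot C$ that falsifies none of the clauses in $R$:
\begin{itemize}
\item every vertex $y \in S\setminus\{\bot\}$ gets the value of $\lnot y$;
\item every other literal of $U$ is set true;
\item any remaining variables are set arbitrarily.
\end{itemize}
The intuition is that such a $y$ is not a consequence of $\lnot C$ along $R$. I must check that each reason clause $R_y$ is satisfied, which requires a case split:
\begin{itemize}
\item If $y \in S$, then $y$ is false. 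Since $\bot$ is a sink, $y \notin \lnot C$, so its reason is not a decision. Every in-neighbor $x$ of $y$ with $x \notin \lnot C$ lies in $S$, and then the literal $\lnot x$ of $R_y$ is true.
\item If every in-neighbor of $y$ lies in $\lnot C$, then $y \in S$ would be impossible unless $y$ itself is in $\lnot C$. So within this case the vertex $y$ is true, and the clause $R_y$ contains the true literal $y$.
\end{itemize}
The paper states a soundness fact earlier: unit propagation from $\lnot C$ with clauses $R$ derives only consequences of $\lnot C \land R$. The assignment just built satisfies $\lnot C \land R$, so by that fact the propagation cannot reach a contradiction. This is the delicate step. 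The case analysis must get every edge type right, including the clause $R_\bot$, whose in-neighbors all need to be in $S$ or in $\lnot C$. If the assignment construction gets messy, I will instead show, by induction along the order of $U$, that every literal propagated from $\lnot C$ has all its paths from decisions passing through $\lnot C$. That is a closure argument: a literal is propagated only when all its in-neighbors are already known.

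\textbf{(2)$\Rightarrow$(3).} Let $D$ be the set of vertices reachable from some decision without passing through $\lnot C$, excluding the endpoints that lie in $\lnot C$. By (2), $\bot \notin D$, and every decision lies in $D$ unless it is itself in $\lnot C$. I take the cut whose decision side is $D$ together with the decisions in $\lnot C$. Every edge $(x,y)$ crossing this cut has $x \in \lnot C$: if $x \in D$ and $x \notin \lnot C$, then $y$ would also be reachable from a decision while avoiding $\lnot C$, putting $y$ in $D$. So the conflict clause of this cut has all its negated literals in $\lnot C$, and hence $C$ is a weakening of it. I should adjust the cut so that decisions in $\lnot C$ sit on the decision side; they are fine there because they are in $\lnot C$.

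\textbf{(3)$\Rightarrow$(1).} If $C \supseteq C_0$ for the conflict clause $C_0$ of a cut, then it suffices to show that unit propagation from $\lnot C_0$ with clauses $R$ reaches $\bot$. I would prove this by induction in topological order over the conflict side of the cut. Every in-neighbor of a conflict-side vertex is either on the conflict side, and so already derived, or crosses the cut, and so lies in $\lnot C_0$. The vertex's reason clause is therefore unit, or conflicting, at that point.
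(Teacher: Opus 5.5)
Your (3)$\Rightarrow$(1) replay argument is fine. The other two directions each have a genuine gap.

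\textbf{(1)$\Rightarrow$(2).} The falsified set $S$, everything that can reach $\bot$ while avoiding $\lnot C$, is too large. Your second bullet is false: a propagated vertex can lie in $S$ even though all of its in-neighbors lie in $\lnot C$, and then its reason clause is falsified. Concretely, take a single decision $d$ and reasons $R_a = \lnot d \lor a$, $R_b = \lnot d \lor b$, $R_c = \lnot a \lor c$, $R_\bot = \lnot c \lor \lnot b$, with $\lnot C = \{a\}$. The path $d \to b \to \bot$ avoids $\lnot C$, and your $S$ is $\{d, b, c, \bot\}$. Your assignment then makes $a$ true and $c$ false, falsifying $R_c$, even though the only in-neighbor of $c$ is $a \in \lnot C$.

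The paper avoids this by falsifying only the vertices of the single path $P$. Every falsified propagated vertex then has its predecessor on $P$ falsified, and every other vertex is true and so satisfies its own reason clause. Falsifying the vertices reachable \emph{from} a decision while avoiding $\lnot C$ would also work.

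Your fallback does not repair this. Unit propagation from $\lnot C$ over $R$ is not confined to deriving $y$ from $R_y$ once the in-neighbors of $y$ are known. It can use reason clauses backwards and derive literals that are not vertices of the graph at all. In the example, starting from $\{a\}$ it derives $c$, then $\lnot b$ from $R_\bot$, then $\lnot d$ from $R_b$. So an induction along the order of $U$ does not account for all propagations, and the satisfying assignment is the right tool.

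\textbf{(2)$\Rightarrow$(3).} Your $D$ excludes the vertices of $\lnot C$, and you add back only the decisions in $\lnot C$. So a non-decision vertex $u \in \lnot C$ with an in-neighbor $x \in D$ lands on the conflict side, and $(x,u)$ crosses the cut with $x \notin \lnot C$. Your step ``$y$ would also be reachable while avoiding $\lnot C$, putting $y$ in $D$'' fails exactly when $y \in \lnot C$. In the same example with $\lnot C = \{a, b\}$, condition (2) holds and $D = \{d\}$. The cut's conflict clause is then $\lnot d$, and $C = \lnot a \lor \lnot b$ is not a weakening of it.

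The fix is to let $D$ be the set of vertices $v$ that have a path from a decision containing no vertex of $\lnot C$ other than possibly $v$ itself; this is the paper's $S$. Then all decisions are in $D$, and $\bot \notin D$ by (2). Your crossing-edge argument then goes through and shows directly that the cut's conflict clause is a subclause of $C$. This repaired route is slightly more direct than the paper's, which first reduces to an inclusion-minimal $\lnot C$ and then proves that $C$ equals the conflict clause exactly. For a weakening, only that one inclusion is needed.
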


\begin{proof}
($1 \Rightarrow 2$) By contrapositive. If there is a path from a decision to $\bot$ that does not use anything in $\lnot C$, we can construct an assignment that satisfies $\lnot C \cup R$, and hence unit propagation cannot deduce contradiction. In particular, choose any assignment that makes every literal in this path false and makes the remaining literals in $U$ true. Then $\lnot C$ is certainly satisfied, and each reason clause is satisfied by the predecessor of its corresponding unit being false.

($2 \Rightarrow 3$) It suffices to prove the case where $\lnot C$ satisfies (2) minimally by inclusion, and in this case we claim that $C$ is actually exactly a conflict clause. Let $(S, T)$ denote a cut where $S$ contains the decisions and $T$ contains $\bot$, and define $S$ to be all $\ell \in U$ for which there is a path from a decision to $\ell$ that does not contain anything in $\lnot C$, except possibly $\ell$ itself. We claim that $\bigvee \{\lnot x \mid (x, y) \text{ across } (S, T)\} = C$, in other words $\{x \mid (x, y) \text{ across } (S, T)\} = \lnot C$.

    For the $\subseteq$ direction, suppose that $\ell \in S$ has an edge to $\ell' \in T$. Because $\ell \in S$, there is a path from a decision to $\ell$ containing nothing in $\lnot C$, except possibly $\ell$ itself. But no such path exists to $\ell'$ despite the edge $(\ell, \ell')$, thus $\ell \in \lnot C$. 

    For the $\supseteq$ direction, we first claim that that $\lnot C \subseteq S$. If this were not true, then there would be some $\ell \in \lnot C \setminus S$ for which every path from decisions to some $\ell$ contains something else from $\lnot C$, contradicting minimality (because $\ell$ could be removed from $\lnot C$).
    
    Take any $\ell \in \lnot C$, and it remains to show that there is an outgoing edge to $T$. By minimality, there must be a path $P$ from decisions to $\bot$ whose only vertex in $\lnot C$ is $\ell$. It must go through an out-neighbor $\ell' \not \in \lnot C$ of $\ell$. Either $\ell' \in T$, so that we are done, or $\ell' \in S$. The path from a decision to $\ell'$ from the definition of $S$, combined with the subpath of $P$ between $\ell'$ and $\bot$, is thus a path from a decision to $\bot$ that does not use $\lnot C$, contradiction. 
    
    ($3 \Rightarrow 1$) This is a well-known fact, by replaying the original unit propagation. 
\end{proof}

The level of a unit is the number of decisions made before it. A conflict clause is \emph{asserting} if it has only one literal (whose negation is) at the last level, we call it the \emph{asserted literal}. A vertex in the implication graph is a \emph{unique implication point (UIP)} if all paths from the last decision to $\bot$ go through it. 

\begin{proposition} \label{prop:sem-uip}
In an implication graph ending with $\bot$, a literal $\ell$ is a UIP if and only if there exists an asserting clause $C$ that asserts $\lnot \ell$.
\end{proposition}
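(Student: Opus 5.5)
We prove both directions via the characterization of conflict clauses in \cref{prop:sem-cc}. Throughout, let $d$ be the last decision, so the units at the last level are exactly the vertices reachable from $d$ together with $d$ itself; these are the last-level units in the trail.

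For ($\Rightarrow$), suppose $\ell$ is a UIP. Consider the set $\lnot C$ consisting of $\ell$ together with every unit of level strictly below the last level. We check condition (2) of \cref{prop:sem-cc}. Take any path from a decision to $\bot$. If it starts at a decision of an earlier level, its first vertex is already in $\lnot C$. Otherwise it starts at $d$, and since $\ell$ is a UIP it passes through $\ell$. Hence $C$ is a weakening of a conflict clause, and by (3)$\Rightarrow$(1) it is a conflict clause in the sense of unit propagation from $\lnot C$ and $R$. It has exactly one literal, $\lnot\ell$, at the last level, so it is asserting and asserts $\lnot\ell$.

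To obtain a genuine (unweakened) conflict clause, we instead take a subset of $\lnot C$ that is minimal with respect to (2). We must check that $\ell$ survives this minimization. The set of earlier-level units alone fails (2), since the path from $d$ to $\bot$ avoids all of them. So $\ell$ is necessary, and the asserted literal remains $\lnot\ell$. The ($2\Rightarrow3$) argument in \cref{prop:sem-cc} shows that this minimal set is exactly a conflict clause.

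For ($\Leftarrow$), suppose $C$ is asserting with asserted literal $\lnot\ell$, meaning $\ell$ is the unique last-level unit in $\lnot C$. By ($3\Rightarrow1\Rightarrow2$) of \cref{prop:sem-cc}, every path from a decision to $\bot$ hits $\lnot C$. Now take any path from $d$ to $\bot$. Every vertex on it is reachable from $d$, so every vertex on it is at the last level. Therefore the vertex of $\lnot C$ it hits can only be $\ell$. Every such path thus passes through $\ell$, so $\ell$ is a UIP.

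The main obstacle is the level bookkeeping. It rests on two facts. First, every vertex on a path from $d$ lies at the last level: units derived after $d$ have level equal to the last level, and edges respect trail order. Second, the earlier-level units do not cut the $d$-paths. A minor subtlety remains: if $\lnot C$ may contain literals not in $U$, we restrict to the case $\lnot C\subseteq U$ as in \cref{prop:sem-cc}, and note that the decision $d$ itself counts as a UIP.
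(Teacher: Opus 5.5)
Your proof is correct. The ($\Leftarrow$) direction matches the paper's: a path from the last decision $d$ stays at the last level, so the only vertex of $\lnot C$ it can meet is $\ell$. The paper gets ``every decision-to-$\bot$ path meets $\lnot C$'' more directly than you do. Such a path starts in $S$ and ends in $T$, so it crosses the cut, and the tail of the crossing edge lies in $\lnot C$. This avoids your detour through (3)$\Rightarrow$(1)$\Rightarrow$(2), whose first step the paper only sketches.

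For ($\Rightarrow$) you take a genuinely different route. The paper builds an explicit cut, putting in $T$ every vertex other than $\ell$ on a path from $\ell$ to $\bot$. It then shows no other last-level literal $\ell'$ has an edge into $T$. That step uses a path from $d$ to $\ell'$, i.e.\ that every last-level unit is reachable from $d$, which relies on propagation having reached a fixpoint before the last decision.

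You instead start from the crude separator $\{\ell\}\cup\{\text{earlier-level units}\}$, verify (2), and let the minimization in the (2)$\Rightarrow$(3) proof of \cref{prop:sem-cc} produce an exact cut clause. You check that $\ell$ survives because a $d$-to-$\bot$ path avoids every earlier-level unit, and (2) is monotone under supersets. This reuses proved machinery and needs only two facts:
\begin{itemize}
\item $\bot$ is reachable from $d$, which the statement needs anyway, since otherwise every literal is vacuously a UIP;
\item vertices reachable from $d$ sit at the last level.
\end{itemize}
So your argument does not need the fixpoint assumption. The price is that your clause is described only implicitly, via minimization, whereas the paper exhibits the natural explicit cut: everything downstream of $\ell$ lies on the conflict side.

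One small point: in your first paragraph, calling the weakened clause ``asserting'' is not justified by itself. The paper defines asserting only for cut (conflict) clauses, so it is your minimization paragraph that actually completes this direction.
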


\begin{proof}
    ($\Rightarrow$) Suppose that $\ell$ is a UIP, and consider the cut $(S, T)$ where $T$ is the set of all literals on some path from $\ell$ to $\bot$, excluding $\ell$. We claim that the conflict clause $C$ corresponding to this cut is an asserting clause that asserts $\lnot \ell$. Clearly $\ell \in \lnot C$, and it remains to show that $\ell$ is the only literal in $\lnot C$ from the last decision level. Suppose for contradiction that there is a literal $\ell' \in \lnot C$ in the last decision level with $\ell' \neq \ell$. 
    
    Since $\ell' \in \lnot C$, it has an outneighbor in $T$ and thus a path to $\bot$ avoiding $\ell$. Because $\ell'$ is in the last decision level, there is also a path from the last decision to $\ell'$, avoiding $\ell$ because $\ell'$ has a path to $\bot$ and $\ell' \not \in T$. Together, we have a path from the last decision to $\bot$ (through $\ell'$) skipping $\ell$, which is a contradiction.

    ($\Leftarrow$) Given an asserting clause $C$ that asserts $\lnot \ell$, consider any path from the last decision to $\bot$. Denoting $(x, y)$ any edge in the path that crosses the cut, we have $x \in \lnot C$, and all units of the path are in the last level, so $x$ must be $\ell$. So $\ell $ is a UIP. 
\end{proof}

Classically, the 1-UIP is the chronologically last UIP, and the 1-UIP clause is the clause corresponding to the cut where everything chronologically after the 1-UIP is in the conflict part. Thus, a reasonable semantic definition for 1-UIP would be any asserting clause whose asserted literal is as late as possible. This allows, for example, clause minimization to occur, while still being considered a 1-UIP clause.

\section{Extended Example} \label{sec:ex}

The following is an extended version of \cref{ex:assert}. Note that in this example, all clauses, as well as the list of units, were carefully chosen to already be in a row-echelon form. 

\begin{example}
Consider three clauses over variables $a, b, c, d, e$:
\begin{align*}
C_1 &= \eq{a \oplus b \oplus e = 1} \lor \eq{b \oplus c \oplus d \oplus e = 0} \\
C_2 &= \eq{b \oplus d = 1} \lor \eq{d \oplus e = 1} \\
C_3 &= \eq{b \oplus e = 0} \lor \eq{c \oplus d = 0}
\end{align*}
We encode each equation as a bitvector by reading off the coefficients, ending with the RHS. Each clause $C_i$ is stored as a basis of $\linneg{C_i}$ (a list of bitvectors):
\begin{align*}
\linneg{C_1} &= (\mathtt{110010},\, \mathtt{011111}) \\
\linneg{C_2} &= (\mathtt{010100},\, \mathtt{000110}) \\
\linneg{C_3} &= (\mathtt{010011},\, \mathtt{001101})
\end{align*}
Suppose that $\CDCL(\oplus)$ decides $f_1 = \eq{b \oplus c = 0}$, i.e., $f_1 = \mathtt{011000}$. To check unit propagation, we reduce each basis vector of each $\linneg{C_i}$ by the known units, kept in row-echelon form, via bitwise XOR when the leading term matches. If at least two distinct non-zero reduced forms remain, the clause does not propagate.
\begin{align*}
\linneg{C_1} \text{ reduced by } \{f_1\} &= (\mathtt{101010},\, \mathtt{000111}) \\
\linneg{C_2} \text{ reduced by } \{f_1\} &= (\mathtt{001100},\, \mathtt{000110}) \\
\linneg{C_3} \text{ reduced by } \{f_1\} &= (\mathtt{001011},\, \mathtt{001101})
\end{align*}
No clause propagates. Next, suppose that $\CDCL(\oplus)$ decides $f_2 = \eq{a \oplus b \oplus d = 1}$, i.e., $f_2 = \mathtt{110101}$. Then,
\begin{equation*}
\linneg{C_1} \text{ reduced by } \{f_1, f_2\} = (\mathtt{000111},\, \mathtt{000111}).
\end{equation*}
Only one distinct reduced form remains, so (remembering that $C_1$ is stored in negated form) unit propagation deduces $f_3 = \eq{d \oplus e = 0}$, i.e., $f_3 = \mathtt{000110}$. Continuing,
\begin{equation*}
\linneg{C_2} \text{ reduced by } \{f_1, f_2, f_3\} = (\mathtt{001010},\, \mathtt{000000}).
\end{equation*}
and we propagate $f_4 = \eq{c \oplus e = 1}$, i.e., $f_4 = \mathtt{001011}$. Finally,
\begin{equation*}
\linneg{C_3} \text{ reduced by } \{f_1, f_2, f_3, f_4\} = (\mathtt{000000},\, \mathtt{000000}).
\end{equation*}
All basis vectors reduce to zero, so $\linneg{C_3} \subseteq \linspan(U)$, and unit propagation derives contradiction, with $R_\bot = C_3$.

Proceeding to conflict analysis, the situation is now exactly the same as \cref{ex:assert}. We rewrite each clause in the trail basis $\{f_1, f_2, f_3, f_4, \eqfalse\}$:
\begin{align*}
C_1 &= (f_2 + f_3) \lor (f_1 + f_3) \\
C_2 &= (f_1 + f_3 + f_4) \lor (f_3 + \eqfalse) \\
C_3 &= (f_1 + f_4 + \eqfalse) \lor (f_3 + f_4 + \eqfalse)
\end{align*}
The unit $f_1$ is at the first decision level, and $f_2, f_3, f_4$ are all at the second decision level. First, $C_3$ is not asserting because unit propagation cannot derive anything from $C_3$ and $f_1$ alone. Thus, we isolate $f_4$ in $C_3$ by a change of basis:
\begin{equation*}
C_3 = (f_1 + f_3 + \eqfalse) \lor (f_3 + f_4 + \eqfalse).
\end{equation*}
(One may also isolate $f_4$ as $(f_1 + f_4 + \eqfalse) \lor (f_1 + f_3 + \eqfalse)$, but \cref{prop:aff-res} guarantees this choice does not affect the result.) Then, the unit $f_4$ was derived using $C_2$, in which $f_4$ is already isolated. Adding $C_2 = (f_1 + f_3 + f_4) \lor (f_3 + \eqfalse)$ with the isolated form of $C_3$ to eliminate $f_4$ yields
\begin{equation*}
C = (f_3 + \eqfalse) \lor (f_1 + f_3 + \eqfalse) \lor (f_1 + \eqfalse).
\end{equation*}
This clause is linearly dependent and simplifies to
\begin{equation*}
C = (f_3 + \eqfalse) \lor (f_1 + \eqfalse).
\end{equation*}
Unit propagation can now derive $f_3 + \eqfalse$ from $f_1$ and $C$, so $C$ is asserting and conflict analysis stops. Translating back to the original variables, we learn the clause
\begin{equation*}
C = \eq{d \oplus e = 1} \lor \eq{b \oplus c = 1}.
\end{equation*}
\end{example}

\end{document}